\pgfplotsset{compat=newest}
\renewcommand{\vec}[1]{\boldsymbol{#1}}
\newcommand{\gvec}[1]{\boldsymbol{#1}}
\newcommand{\mat}[1]{\mathbf{#1}}
\newtheorem{lemma}{Lemma}
\newtheorem{theorem}{Theorem}
\theoremstyle{remark}
\DeclareMathOperator{\E}{\boldsymbol{\mathbb{E}}}
 \DeclareMathOperator{\tr}{tr}
\def\SCV{\mathrm{SCV}}
\def\PI{\mathrm{PI}}
\def\CV{\mathrm{CV}}
\def\comm{\mat{K}}
\def\Herm{\boldsymbol{\mathcal{H}}}
\renewcommand{\today}{\begingroup
\number \day\space  \ifcase \month \or January\or February\or
March\or April\or May\or June\or July\or August\or September\or
October\or November\or December\fi \space  \number \year \endgroup}
\newcommand{\D}{\mathsf{D}}
\newcommand{\bH}{{\bf H}}
\newcommand{\bG}{{\bf G}}
\newcommand{\bI}{{\bf I}}
\newcommand{\bx}{{\boldsymbol x}}
\newcommand{\bz}{{\boldsymbol z}}
\newcommand{\bX}{{\bf X}}
\newcommand{\be}{{\boldsymbol e}}
\newcommand{\bmu}{{\boldsymbol \mu}}
\newcommand{\bpsi}{{\boldsymbol \psi}}
\renewcommand{\S}{\boldsymbol{\mathcal S}}
\newcommand{\bSigma}{{\boldsymbol \Sigma}}
\renewcommand{\vec}{\operatorname{vec}}
\theoremstyle{plain}
\newtheorem{cor}{Corollary}
\newtheorem{teor*}{Teorema}
\theoremstyle{definition}
\title{Efficient recursive algorithms for functionals based on higher order derivatives of the multivariate Gaussian density}
\author{José E. Chacón\footnote{Departamento de
Matemáticas, Universidad de Extremadura, E-06006 Badajoz, Spain. E-mail:
{\tt jechacon@unex.es}}\ \ and Tarn Duong\footnote{{Sorbonne Universities,
University Pierre and Marie Curie (UPMC) -- Paris 6,
Theoretical and Applied Statistics Laboratory (LSTA), UR 1, F-75005, Paris, France.}}
\footnote{{Assistance Publique-Hôpitaux de Paris (AP-HP), Pitié-Salpêtrière Hospital, Institute of Translational Neurosciences (IHU-A-ICM), F-75005 Paris, France.
Email: {\tt tarn.duong@upmc.fr}}}}
\date{\today}
\begin{document}

\maketitle

\begin{abstract}
\noindent Many developments in Mathematics involve the computation of higher order
derivatives of Gaussian density functions. The analysis of univariate Gaussian random variables is
a well-established field whereas the analysis of their multivariate counterparts consists of a body
of results which are more dispersed. These latter results generally fall into two main categories:
theoretical expressions which reveal the deep structure of the problem, or computational algorithms
which can mask the connections with closely related problems. In this paper, we unify existing
results and develop new results in a framework which is both conceptually cogent and
computationally efficient. We focus on the underlying connections between higher order derivatives
of Gaussian density functions, the expected value of products of quadratic forms in Gaussian
random variables, and $V$-statistics of degree two based on Gaussian density functions. These three
sets of results are combined into an analysis of non-parametric data smoothers.

\end{abstract}

\medskip
\noindent {\em Keywords:} Hermite polynomial, derivative, kernel estimator, normal density, quadratic forms, symmetrizer matrix, $V$-statistics.

\noindent {\em AMS 2010 Mathematics Subject Classification:} 15A24, 65F30, 62E10, 62G05, 62H05

\section{Introduction}

Gaussian random variables and their associated probability density functions are commonly studied
in Statistics since they possess many attractive theoretical and computational properties. In fact,
Gaussian functions and its derivatives appear as fundamental tools in many areas of Mathematics,
and also in other disciplines like Physics or Engineering.

Many results have been established for univariate Gaussian random variables in a unified framework.
For multivariate random variables, many results are available as well, but due to the lack of a
commonly accepted notation for higher order derivatives of the multivariate functions, these are
more scattered.

In this paper we adopt the vectorized form of higher order multidimensional derivatives, which was
the key tool that allowed to obtain explicit formulas for the moments of the multivariate normal
distribution of arbitrary order \citep{Hol88} and for the higher order derivatives of the
multivariate Gaussian density function, through the introduction of vector Hermite polynomials
\citep{Hol96a}.

These polynomials, however, depend on a matrix (so-called symmetrizer matrix) having an enormous
number of entries, even when the dimension and the derivative order are not high. Thus, although
these results provide a general formulation that is valid and useful for developing theory in any
dimension and for an arbitrary derivative/moment order, some authors like \cite{Tri03} or
\cite{Kan08} pointed out the difficulties that the computation of such a large matrix represents in
practical situations.

Here we unify existing results, as well as developing new ones, in a cogent framework which
facilitates a concise theoretical form as well as an efficient computational form.
We begin by
introducing the vectorized form of the higher order derivatives of the multivariate Gaussian
density functions, via their factorization as Hermite polynomials, in Section~\ref{sec:deriv}.
Efficient recursive algorithms to compute the involved high-dimensional symmetrizer matrix, and the product of this matrix and a high-dimensional vector, are discussed in Sections \ref{sec:symm} and \ref{sec:4}, respectively. A different approach is developed in Section \ref{sec:unique}, by focusing on the recursive computation of the unique partial derivative operators that unequivocally determine the full derivative vector. We then focus on some statistical applications intimately linked with the derivatives of the multivariate Gaussian density function: the computation of moments of multivariate normal distributions and the expectation of powers of quadratic forms in Gaussian random variables are explored in Sections \ref{sec:moment} and \ref{sec:quad}, respectively, including a new result providing a formula for the joint cumulants of quadratic forms in normal variables that corrects an identity included in \cite{MP92} that is not correct in general. In Section~\ref{sec:kde} we show how these functionals are extremely useful for the analysis of non-parametric data smoothers, which involve the computation of $V$-statistics of degree two based on derivatives of multivariate Gaussian density functions. Finally, in Section \ref{sec:comparisons} all the newly introduced recursive algorithms are compared to the standard, direct approach, in terms of computation time.

\color{black}

\section{Higher order derivatives of Gaussian density functions}
\label{sec:deriv}

The characterization of the $r$-th order derivatives of a $d$-variate function can be expressed in
many ways using, e.g. matrices, tensors or iterated permutations. We use the characterization using
Kronecker products of vectors, popularized by \citet{Hol96a}. Let $f$ be a real $d$-variate
function, $\bx=(x_1,\dots,x_d)$, and $\D=\partial/\partial\bx=(\partial/\partial
x_1,\dots,\partial/\partial x_d)$ be the first derivative (gradient) operator. If the usual
convention $(\partial/\partial x_i)(\partial/\partial x_j) = \partial^2/( \partial x_i\partial
x_j)$ is taken into account, then the $r$-th derivative of $f$ is defined to be the vector
$\D^{\otimes r} f= (\D f)^{\otimes r}=\partial^r f/\partial{\bx}^{\otimes r}\in\mathbb R^{d^r}$,
with $\D^{\otimes 0} f=f$, $\D^{\otimes 1}f=\D f$. Here, $\mathsf D^{\otimes r}$ refers to the
$r$-th Kronecker power of the operator $\mathsf D$, formally understood as the $r$-fold product
$\mathsf D\otimes\cdots\otimes\mathsf D$.

For example,  all the second order partial derivatives can be organized into the usual Hessian
matrix $\mathsf Hf=(\partial^2f/\partial x_i\partial x_j)_{i,j=1}^d$, and the Hessian operator can
be formally written as $\mathsf H=\D\D^\top$. The equivalent vectorized form is $\D^{\otimes
2}=\vec \mathsf H$, where $\vec$ denotes the operator which concatenates the columns of a matrix
into a single vector, see \citet{HS79}.

For Hessian matrices, there is not much gain from using this vectorized form since the matrix form
is already widely analyzed. However for $r>2$, this vectorized characterization, which maintains
all derivatives as vectors, has contributed to recent advances in multivariate analysis which have
been long hindered by the lack of suitable analytical tools. For example, \citet{CD10,CD11} and
\citet{CDW11} treated higher derivatives involved in multivariate non-parametric data smoothing.
These authors relied heavily on the derivatives of the Gaussian density function, as defined in
terms of the Hermite polynomials.

Let $\phi(\bx)=(2\pi)^{-d/2}\exp(-\frac12\bx^\top\bx)$ be the standard $d$-dimensional Gaussian density and $\phi_\bSigma(\bx)=|\bSigma|^{-1/2}\phi(\bSigma^{-1/2}\bx)$ be the centred Gaussian density with variance $\bSigma$. \citet{Hol96a} showed that the $r$-th derivative of $\phi_\bSigma$ is
\begin{equation} \label{Drphi}
\D^{\otimes r} \phi_\bSigma( \bx) = (-1)^r (\bSigma^{-1})^{\otimes r} \Herm_r(\bx; \bSigma) \phi_\bSigma( \bx),
\end{equation}
\color{black}
where the $r$-th order Hermite polynomial is defined by
\begin{equation}\label{Herm}
 \Herm_r(\bx; \bSigma)  = r! \S_{d,r} \sum_{j=0}^{[r/2]}  \frac{(-1) ^{j}}{j! (r-2j)! 2^j }
\big\{\bx^{\otimes (r - 2j)} \otimes  (\vec \bSigma)^{\otimes j} \big\}.
\end{equation}
Here $\S_{d,r}$ is the $d^r \times d^r$ symmetrizer matrix defined as
\begin{equation}\label{expdef}
\S_{d,r}=\frac{1}{r!}\sum_{i_1,i_2,\dots,i_r=1}^d\;\sum_{\sigma\in\mathcal
P_r}\bigotimes_{\ell=1}^r\be_{i_\ell}\be_{i_{\sigma(\ell)}}^T,
\end{equation}
with $\mathcal P_r$ standing for the group of permutations of order $r$ and $\be_i$ for the $i$th
column of $\bI_d$, the identity matrix of order $d$. We also have that $\S_{d,0}=1$ and $\S_{d,1}=\bI_d$.
This definition is highly abstract so we take a concrete example to demonstrate
the action of this symmetrizer matrix on a 3-fold product, i.e.
$\boldsymbol{\mathcal{S}}_{d,3}
(\bx_1 \otimes \bx_2 \otimes \bx_3) = \frac{1}{6}
[ \bx_1 \otimes \bx_2 \otimes \bx_3
+ \bx_1 \otimes \bx_3 \otimes \bx_2
+ \bx_2 \otimes \bx_1 \otimes \bx_3
+ \bx_2 \otimes \bx_3 \otimes \bx_1
+ \bx_3 \otimes \bx_1 \otimes \bx_2
+ \bx_3 \otimes \bx_2 \otimes \bx_1]$.
In general, the symmetrizer matrix $\S_{d,r}$ maps the product
$\bigotimes_{i=1}^r \bx_i$ to
an equally weighted linear combination of products of
all possible permutations of $\bx_1, \dots, \bx_r$.

The goal of this paper is to investigate efficient ways to compute the $r$-th derivative
$\D^{\otimes r} \phi_\bSigma( \bx)$ of the multivariate Gaussian density function, and their
applications to several statistical problems.

\section{Recursive computation of the symmetrizer matrix}
\label{sec:symm}

Surely the most prohibitive element in the computation of the $r$-th derivative of the $d$-variate Gaussian density is the symmetrizer matrix $\S_{d,r}$. It is a huge matrix, even for low values of $d$ and $r$ (for instance, $\S_{4,8}$ is a matrix of order $65536\times 65536$) and its definition involves $r!d^r$ summands, hence its direct calculation can be onerous both in memory storage and computational time.

Nevertheless, this symmetrizer matrix has independent interest on its own from an algebraic point of view. This is well certified by the fact that it has been independently discovered many times. To our knowledge, \cite{Hol85} was the first to develop its form as a generalization of Kronecker product permuting matrices. More recently, \cite{Sch03} and \cite{Mei05} found alternative derivations and further interesting properties.

First, to reduce the number of loops in (\ref{expdef}) it is useful to consider the conversion to
base $d$. Any number $i\in\{0,1,\dots,d^r-1\}$ can be written in base $d$ as $(a_r\ldots
a_2a_1)_d$, with digits $a_j\in\{0,1,\dots,d-1\}$, meaning that $i=\sum_{j=1}^ra_jd^{j-1}$. A
simple translation yields that the correspondence between the set
$\mathcal{PR}_{d,r}=\big\{(i_1,\dots,i_r)\colon i_1,\dots,i_r\in\{1,\dots,d\}\big\}$ of
permutations with repetition of $d$ elements, taken $r$ at a time, and the set $\{1,2,\dots,d^r\}$,
given by $p(i_1,\dots,i_r)=1+\sum_{j=1}^r(i_j-1)d^{j-1}$ is also bijective, hence all $r$-tuples
$(i_1,\dots,i_r)$ involved in the multi-index for the first summation in (\ref{expdef}) can be
obtained as $p^{-1}(i)$ as $i$ ranges over $\{1,2,\dots,d^r\}$ (see Appendix \ref{pinv}), so that
only two loops are needed for the direct computation of the symmetrizer matrix. Moreover, after a
careful inspection of the $r$-fold Kronecker product involved it follows that (\ref{expdef}) can be
written as
\begin{equation}\label{expdef2}
\S_{d,r}=\frac{1}{r!}\sum_{i=1}^{d^r}\sum_{\sigma\in\mathcal
P_r}\mat E_{i,(p\circ\sigma\circ p^{-1})(i)},
\end{equation}
where $\mat E_{i,j}$ represents the $d^r\times d^r$ matrix having the $(i, j)$-th element equal to
1 as its only nonzero element. The operator $p^{-1}$ maps an integer $i$ to a unique $r$-tuple
$(i_1, \dots, i_r)$, the operator $\sigma$ generates a permutation of a given $r$-tuple, and the
operator $p$ maps an $r$-tuple to an integer in $\lbrace 1, \dots, d^r \rbrace$. So the composition
$(p\circ\sigma\circ p^{-1})$, as $i$ ranges over $\lbrace 1, \dots, d^r \rbrace$ and $\sigma$ over
$\mathcal P_r$, generates an equivalent set to the set of permutations defined in~(\ref{expdef}).
Hence, the novel formulation in Equation~(\ref{expdef2}) is more appropriate for efficient
computations.

Even in this simple form, the direct implementation of $\S_{d,r}$ using (\ref{expdef2}) usually
takes a considerable amount of time as $d$ and $r$ increase, due to the large number of terms
involved in each of the two loops. A useful way to improve over the direct approach is to use a
recursive implementation of $\S_{d,r}$. Thus, the goal of this section is to express the
symmetrizer matrix $\S_{d,r+1}$ in terms of the symmetrizer matrix of lower order $\S_{d,r}$.

Let us denote by $\mathcal M_{m\times n}$ the set of all $m\times n$ matrices and let $\comm_{r,s} \in \mathcal{M}_{rs \times rs}$ be the commutation matrix of order $r,s$; see \citet*{MN79}. The commutation matrix allows us to commute the order of the matrices in a Kronecker
product, e.g., if $\mat A \in  \mathcal{M}_{m \times n}$ and $\mat B \in  \mathcal{M}_{p \times
q}$, then $\comm_{p,m} (\mat A \otimes \mat B) \comm_{n,q} = \mat B \otimes \mat A$. The relationship between $\S_{d,r+1}$ and
$\S_{d,r}$ is given in the next theorem.

\begin{theorem}\label{thm:1}
Consider the matrix $\mat T_{d,r}\in\mathcal M_{d^r\times d^r}$ defined by
$$\displaystyle \mat T_{d,r}=\frac1r\sum_{j=1}^r(\bI_{d^j}\otimes\mat K_{d^{r-j-1},d})(\bI_{d^{j-1}}\otimes\mat K_{d,d^{r-j}})$$
where, by convention, $\mat K_{d^{-1},d}=1\in\mathbb R$. Then $\S_{d,r+1}=(\S_{d,r}\otimes\bI_d)
\mat T_{d,r+1}$.
\end{theorem}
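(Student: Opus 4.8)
My strategy is to reduce everything to the action on a simple $r$-fold Kronecker product $\bx_1 \otimes \cdots \otimes \bx_r$, since a matrix in $\mathcal M_{d^r \times d^r}$ is determined by its action on such products (they span $\mathbb R^{d^r}$ as the $\bx_i$ range over $\mathbb R^d$). Recall from the text that $\S_{d,r}$ maps $\bigotimes_{i=1}^r \bx_i$ to the equally weighted average of all $r!$ permutations $\frac{1}{r!}\sum_{\sigma \in \mathcal P_r} \bx_{\sigma(1)} \otimes \cdots \otimes \bx_{\sigma(r)}$. The plan is therefore to show that $(\S_{d,r} \otimes \bI_d)\, \mat T_{d,r+1}$ produces this same average for $\S_{d,r+1}$ when applied to $\bx_1 \otimes \cdots \otimes \bx_{r+1}$.

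First I would pin down what $\mat T_{d,r+1}$ does. The factor $(\bI_{d^{j-1}} \otimes \mat K_{d,d^{r+1-j}})$ acting on $\bx_1 \otimes \cdots \otimes \bx_{r+1}$ leaves the first $j-1$ factors fixed and, using the commutation identity $\mat K_{d,d^{r+1-j}}(\bx_j \otimes (\bx_{j+1}\otimes\cdots\otimes\bx_{r+1})) = (\bx_{j+1}\otimes\cdots\otimes\bx_{r+1})\otimes \bx_j$, cyclically moves $\bx_j$ to the last slot among positions $j,\dots,r+1$. Then the factor $(\bI_{d^j} \otimes \mat K_{d^{r-j-1+1},d})=(\bI_{d^j}\otimes \mat K_{d^{r-j},d})$ fixes the first $j$ factors and swaps the block of size $d^{r-j}$ with the trailing single factor of size $d$ — i.e. it pulls $\bx_j$ (now sitting at the end) back to position $j+1$, reinserting it just after the (now moved-up) block. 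The net effect of the composite $(\bI_{d^j}\otimes\mat K_{d^{r-j},d})(\bI_{d^{j-1}}\otimes\mat K_{d,d^{r+1-j}})$ is thus the single transposition-free cyclic shift that takes $(\bx_j, \bx_{j+1}, \dots, \bx_{r+1})$ to $(\bx_{j+1},\dots,\bx_{r+1},\bx_j)$ — or equivalently it moves $\bx_j$ into the last position while shifting $\bx_{j+1},\dots,\bx_{r+1}$ one place left. Summing over $j=1,\dots,r+1$ and dividing by $r+1$, $\mat T_{d,r+1}$ sends $\bx_1\otimes\cdots\otimes\bx_{r+1}$ to $\frac{1}{r+1}\sum_{j=1}^{r+1} \bx_1\otimes\cdots\otimes\bx_{j-1}\otimes\bx_{j+1}\otimes\cdots\otimes\bx_{r+1}\otimes\bx_j$: each original factor is cycled into the last slot exactly once, the others keeping their relative order.

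Next I would apply $\S_{d,r}\otimes\bI_d$ to that sum. On each summand, $\S_{d,r}$ symmetrizes the first $r$ factors — the reordered list with $\bx_j$ removed — while $\bI_d$ leaves the trailing $\bx_j$ untouched. So $(\S_{d,r}\otimes\bI_d)\mat T_{d,r+1}$ sends $\bx_1\otimes\cdots\otimes\bx_{r+1}$ to
$$
\frac{1}{r+1}\sum_{j=1}^{r+1}\Big(\frac{1}{r!}\sum_{\tau}\bx_{\tau(1)}\otimes\cdots\otimes\bx_{\tau(r)}\Big)\otimes\bx_j,
$$
where $\tau$ runs over the permutations of the index set $\{1,\dots,r+1\}\setminus\{j\}$. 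Every permutation $\sigma$ of $\{1,\dots,r+1\}$ arises exactly once in this double sum: choose $j=\sigma(r+1)$ for the outer index, and $\tau=(\sigma(1),\dots,\sigma(r))$ for the inner one. Hence the expression equals $\frac{1}{(r+1)!}\sum_{\sigma\in\mathcal P_{r+1}}\bx_{\sigma(1)}\otimes\cdots\otimes\bx_{\sigma(r+1)}$, which is precisely $\S_{d,r+1}(\bx_1\otimes\cdots\otimes\bx_{r+1})$. Since the two linear maps agree on a spanning set, they are equal.

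The main obstacle, and the part deserving the most care, is the bookkeeping in the second paragraph: verifying that the specific composition of commutation matrices in the definition of $\mat T_{d,r}$ implements exactly the cyclic relocation of $\bx_j$ to the terminal slot, with the correct dimension exponents (in particular handling the boundary case $j=r+1$, where $\mat K_{d^{-1},d}=1$ makes one factor trivial and the term is just the identity, leaving $\bx_{r+1}$ already in place). I would double-check the index arithmetic against the small cases $r=1$ and $r=2$ — e.g. confirming $\S_{d,2}=(\bI_d\otimes\bI_d)\mat T_{d,2}=\frac12(\bI_{d^2}+\mat K_{d,d})$, which is the known form of the second-order symmetrizer — before trusting the general pattern.
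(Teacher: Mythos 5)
Your proof follows essentially the same route as the paper's: both rest on writing $\mathcal P_{r+1}$ as $\mathcal P_r$ composed with the $r+1$ maps that send one chosen index to the terminal position, so that $\mat T_{d,r+1}$ contributes the ``relocate $\bx_j$ to the last slot'' part and $\S_{d,r}\otimes\bI_d$ symmetrizes the rest; your closing count (each $\sigma\in\mathcal P_{r+1}$ arising exactly once from $j=\sigma(r+1)$ together with an ordering of the complement) is the same bijection as the paper's Lemma~\ref{perm}. The only genuine difference is presentational: the paper manipulates the elementary-matrix form of (\ref{expdef}) and right-multiplies by the commutation factors (its Lemma~\ref{trans}), whereas you test the two linear maps on simple tensors $\bx_1\otimes\cdots\otimes\bx_{r+1}$, which span $\mathbb R^{d^{r+1}}$ and so suffice. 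That reduction is legitimate and arguably cleaner.

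There is, however, a concrete error in your bookkeeping for $\mat T_{d,r+1}$, caused by reversing the index convention of the commutation matrix. With the convention the paper fixes in Section~\ref{sec:symm} ($\mat K_{p,m}(\mat A\otimes\mat B)\mat K_{n,q}=\mat B\otimes\mat A$ for $\mat A\in\mathcal M_{m\times n}$, $\mat B\in\mathcal M_{p\times q}$, hence $\mat K_{p,m}(\boldsymbol a\otimes\boldsymbol b)=\boldsymbol b\otimes\boldsymbol a$ for $\boldsymbol a\in\mathbb R^{m}$, $\boldsymbol b\in\mathbb R^{p}$), the factor $\bI_{d^{j-1}}\otimes\mat K_{d,d^{r+1-j}}$ brings the \emph{last} factor $\bx_{r+1}$ forward to position $j$, and $\bI_{d^{j}}\otimes\mat K_{d^{r-j},d}$ then sends $\bx_j$ to position $r+1$; the net effect of the $j$-th composite is the transposition exchanging positions $j$ and $r+1$, producing $\bx_1\otimes\cdots\otimes\bx_{j-1}\otimes\bx_{r+1}\otimes\bx_{j+1}\otimes\cdots\otimes\bx_r\otimes\bx_j$, not the cyclic shift you assert. (Your own two step-descriptions are also mutually inconsistent: composed as written they would give the adjacent transposition of positions $j$ and $j+1$, which contradicts the cyclic net effect you then state.) The slip happens to be harmless here: both the transposition and the cycle leave $\bx_j$ in the terminal slot and the remaining $r$ factors, in some order, in the first $r$ slots, and left-multiplication by $\S_{d,r}\otimes\bI_d$ erases that order. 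But your claimed formula for $\mat T_{d,r+1}(\bx_1\otimes\cdots\otimes\bx_{r+1})$ is itself false for $r+1\geq 3$ (for instance the $j=1$ term of $\mat T_{d,3}$ maps $\bx_1\otimes\bx_2\otimes\bx_3$ to $\bx_3\otimes\bx_2\otimes\bx_1$, not to $\bx_2\otimes\bx_3\otimes\bx_1$), and the sanity checks you propose at $r=1,2$ cannot detect this, since transpositions and cycles coincide there.
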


From Theorem \ref{thm:1} it follows that, to obtain a recursive formula for $\S_{d,r}$, it suffices to obtain
a recursive formula for $\mat T_{d,r}$. This is provided in the next result.

\begin{theorem}\label{thm:2}
For any $r\geq1$ the relationship between $\mat T_{d,r+1}$ and $\mat T_{d,r}$ is given by
$$(r+1)\mat T_{d,r+1}=(\bI_{d^{r-1}}\otimes\mat K_{d,d})(r\mat
T_{d,r}\otimes\bI_d)(\bI_{d^{r-1}}\otimes\mat K_{d,d})+\bI_{d^{r-1}}\otimes\mat K_{d,d}.$$
\end{theorem}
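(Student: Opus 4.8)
My plan is to work directly from the definition
$$r\,\mat T_{d,r}=\sum_{j=1}^r(\bI_{d^j}\otimes\mat K_{d^{r-j-1},d})(\bI_{d^{j-1}}\otimes\mat K_{d,d^{r-j}}),$$
and show that conjugating $r\mat T_{d,r}\otimes\bI_d$ by $\bI_{d^{r-1}}\otimes\mat K_{d,d}$ reproduces the first $r$ terms of the sum defining $(r+1)\mat T_{d,r+1}$, leaving exactly $\bI_{d^{r-1}}\otimes\mat K_{d,d}$ as the missing $(r{+}1)$-st term. So the three main steps are: (i) expand $(r\mat T_{d,r}\otimes\bI_d)$ as $\sum_{j=1}^r(\bI_{d^j}\otimes\mat K_{d^{r-j-1},d}\otimes\bI_d)(\bI_{d^{j-1}}\otimes\mat K_{d,d^{r-j}}\otimes\bI_d)$; (ii) push the two outer factors $\bI_{d^{r-1}}\otimes\mat K_{d,d}$ through each summand; (iii) identify the result as the $j$-th term (for $j=1,\dots,r$) of $(r{+}1)\mat T_{d,r+1}$.

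The technical engine for step (ii) is the standard commutation-matrix identity: for $\mat A\in\mathcal M_{m\times n}$ one has $\mat K_{p,m}(\mat A\otimes\bI_p)=(\bI_p\otimes\mat A)\mat K_{p,n}$, together with the block identities $\mat K_{a,bc}=(\bI_b\otimes\mat K_{a,c})(\mat K_{a,b}\otimes\bI_c)$ and $\bI_a\otimes\mat K_{b,c}=\mat K$-type rearrangements; I would also use $\mat K_{1,d}=\mat K_{d,1}=\bI_d$. Concretely, $\bI_{d^{r-1}}\otimes\mat K_{d,d}$ acts on the last $d^{r+1}$-to-$d^{r+1}$ block as a transposition of the final two "slots" of size $d$; propagating it past $\bI_{d^j}\otimes\mat K_{d^{r-j-1},d}\otimes\bI_d$ should shift the index pattern so that $\mat K_{d^{r-j-1},d}\otimes\bI_d$ becomes $\bI_d\otimes\mat K_{d^{r-j},d}$ (absorbing one of the conjugating factors on the relevant side), and similarly on the inner factor $\bI_{d^{j-1}}\otimes\mat K_{d,d^{r-j}}\otimes\bI_d$ the conjugation turns $\mat K_{d,d^{r-j}}\otimes\bI_d$ into $\mat K_{d,d^{r-j+1}}$. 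After relabelling, the $j$-th conjugated summand becomes $(\bI_{d^{j+1}}\otimes\mat K_{d^{r-j},d})(\bI_{d^{j}}\otimes\mat K_{d,d^{r-j+1}})$ — wait, more carefully, it should land on $(\bI_{d^j}\otimes\mat K_{d^{(r+1)-j-1},d})(\bI_{d^{j-1}}\otimes\mat K_{d,d^{(r+1)-j}})$, i.e. exactly the $j$-th term of the sum for $(r{+}1)\mat T_{d,r+1}$. The term $j=r+1$ of that sum is $(\bI_{d^{r+1}}\otimes\mat K_{d^{-1},d})(\bI_{d^{r}}\otimes\mat K_{d,1})=\bI_{d^{r+1}}$, which is not $\bI_{d^{r-1}}\otimes\mat K_{d,d}$; so instead I expect the bookkeeping to show that the additive correction is precisely the $j=1$ term in disguise, or that one term of the conjugated sum is consumed differently — this index accounting is where I must be most careful.

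The main obstacle will be exactly this index bookkeeping: tracking which $\bI_{d^{a}}$ block each $\mat K$ sits inside after conjugation, and verifying that the shift $r\mapsto r+1$ in the superscripts $d^{r-j-1}$ and $d^{r-j}$ is reproduced correctly term by term, including the degenerate endpoints ($j=1$, $j=r$, and the convention $\mat K_{d^{-1},d}=1$). I would organize this by writing everything as operators on an ordered list of $r+1$ tensor slots of size $d$, interpreting $\bI_{d^{j-1}}\otimes\mat K_{d,d^{r-j}}\otimes\bI_d$ as "cyclically rotate slots $j$ through $r$" and $\mat K_{d,d}$-conjugation as "swap slots $r$ and $r+1$", and checking the claimed identity as an identity of permutations of slots (each side being a sum, over a permutation parameter $j$, of the permutation matrix realizing that slot operation). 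Reducing to an identity in the symmetric group $\mathcal P_{r+1}$ acting on the tensor factors, and invoking the faithfulness of this permutation representation on $(\mathbb R^d)^{\otimes(r+1)}$, should make the verification a finite check rather than a matrix computation; the subtlety is only in getting the rotation ranges and the extra $+\,\bI_{d^{r-1}}\otimes\mat K_{d,d}$ term to match on the nose.
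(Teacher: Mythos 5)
Your overall strategy is exactly the paper's: the paper proves this by isolating, as a separate lemma, the factorizations $\mat K_{d^{p+1},d}=(\bI_{d^p}\otimes\mat K_{d,d})(\mat K_{d^p,d}\otimes\bI_d)$ and $\mat K_{d,d^{p+1}}=(\mat K_{d,d^p}\otimes\bI_d)(\bI_{d^p}\otimes\mat K_{d,d})$ that you invoke, and then matches the two sums term by term. The only cosmetic difference is direction: the paper starts from the sum defining $(r+1)\mat T_{d,r+1}$ and peels the factors $\bI_{d^{r-1}}\otimes\mat K_{d,d}$ \emph{out} of each summand, rather than pushing them \emph{in}; your permutation-of-slots picture is a legitimate way to organize the same check.

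The one genuine gap is the endpoint accounting, precisely where you flag uncertainty, and your tentative guess there (``the additive correction is the $j=1$ term in disguise'') is wrong. The correct ledger is as follows. For $j\le r-1$, the $j$-th term of $(r+1)\mat T_{d,r+1}$ is the conjugate of the $j$-th term of $r\mat T_{d,r}\otimes\bI_d$, as you predict. The $j=r+1$ term of $(r+1)\mat T_{d,r+1}$ equals $\bI_{d^{r+1}}$ (by the conventions $\mat K_{d^{-1},d}=1$ and $\mat K_{d,1}=\bI_d$), and it is matched by the conjugate of the $j=r$ term of $r\mat T_{d,r}\otimes\bI_d$, which is also $\bI_{d^{r+1}}$ because that term is $\bI_{d^r}\otimes\bI_d$ and $(\bI_{d^{r-1}}\otimes\mat K_{d,d})^2=\bI_{d^{r+1}}$. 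The additive correction is the $j=r$ term of $(r+1)\mat T_{d,r+1}$, namely
\begin{equation*}
(\bI_{d^r}\otimes\mat K_{1,d})(\bI_{d^{r-1}}\otimes\mat K_{d,d})=\bI_{d^{r-1}}\otimes\mat K_{d,d},
\end{equation*}
which cannot be absorbed by the conjugation because the factorization of $\mat K_{d^{r-j},d}$ used for $j\le r-1$ degenerates at $j=r$. With that accounting fixed, your term-by-term matching closes the proof exactly as in the paper.
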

\enlargethispage{.1cm}
Combining Theorems \ref{thm:1} and \ref{thm:2}, the proposed  recursive algorithm to compute $\S_{d,r}$ reads as follows:

\begin{algorithm}[h!t]
\caption{Recursive symmetrizer matrix computation}\label{RecSym}
\SetKwInOut{Input}{Input}\SetKwInOut{Output}{Output}
\Input{dimension $d$ and order $r$}
\Output{Symmetrizer matrix $\S_{d,r}$}
\begin{enumerate}
\item If $r=0$ set $\S_{d,r}=1$
\item If $r=1$ set $\S_{d,r}=\bI_d$
\item If $r\geq2$ then set $\mat S=\mat T=\mat I_d$ and $\mat A=\mat K_{d,d}$

For $i$ in $2,\dots,r$:

\qquad Set $\mat T=\mat A(\mat T\otimes \bI_d)\mat A+\mat A$ and $\mat S=(\mat S\otimes \bI_d)\mat T$

\qquad If $i<r$, set $\mat A=\bI_d\otimes \mat A$

\item Return $\S_{d,r}=\mat S/r!$
\end{enumerate}
\end{algorithm}

The proofs of all the new results in the paper, including Theorems~\ref{thm:1} and \ref{thm:2},
will be deferred to Appendix \ref{proofs}. Besides, a detailed comparison of the computation times
for the direct approach (based on Equation (\ref{expdef2})) and the new recursive
Algorithm~\ref{RecSym} is given below in Section \ref{sec:comparisons}.

\section{Recursive computation of the product of the symmetrizer matrix and a vector}\label{sec:4}

Although the computation of symmetrizer matrices has an algebraic interest on its own, recall from
the Introduction that the primary motivation for the name of the symmetrizer matrix is its
symmetrizing action on a Kronecker product vector. Thus, for a vector $\boldsymbol
v=(v_1,\dots,v_{d^r})$, the product $\S_{d,r} \boldsymbol v$ deserves to be studied more closely.
For example, when the final goal is to obtain the $r$-th order Hermite polynomial it may not be
strictly necessary to compute $\S_{d,r}$ explicitly. To understand this notice that, from
(\ref{expdef2}), the $i$-th coordinate of the vector $\boldsymbol w=\S_{d,r}\boldsymbol v$ is just
\begin{equation}\label{Sdrv}
w_i=\frac{1}{r!}\sum_{\sigma\in\mathcal P_r}v_{(p\circ\sigma\circ p^{-1})(i)}.
\end{equation}
This makes it feasible to obtain $\S_{d,r}\boldsymbol v$ for higher values of $d$ and $r$, in situations where memory limitations do not allow us to compute the whole matrix $\S_{d,r}$.

The recursive approach to compute $\S_{d,r}\boldsymbol v$ is based on the following corollary of
Theorem \ref{thm:1}, in which we show that by induction it is possible to obtain a new
representation of the symmetrizer matrix, a factorization with $r$ factors depending only on the
$\mat T_{d,k}$ matrices for $k=1,\dots,r$.

\begin{cor}\label{cor:1}
For any $r=1,2,\ldots$, the symmetrizer matrix can be factorized as
$$\S_{d,r}=\prod_{k=1}^r(\mat T_{d,k}\otimes\bI_{d^{r-k}})=(\mat T_{d,1}\otimes\bI_{d^{r-1}})(\mat
T_{d,2}\otimes\bI_{d^{r-2}})\cdots (\mat T_{d,r-1}\otimes\bI_{d})
\mat T_{d,r}.$$
This factorization can be further simplified by noting that $\mat T_{d,1}=\bI_d$.
\end{cor}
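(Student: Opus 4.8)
The plan is to establish the factorization by a short induction on $r$, with Theorem~\ref{thm:1} serving as the only recursive input and the elementary mixed-product property of the Kronecker product doing the bookkeeping.

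For the base case $r=1$, the asserted product collapses to its single factor $\mat T_{d,1}\otimes\bI_{d^0}=\mat T_{d,1}$. Unwinding the definition of $\mat T_{d,r}$ at $r=1$, and using the stated conventions $\mat K_{d^{-1},d}=1$ and $\mat K_{d,1}=\bI_d$, one finds $\mat T_{d,1}=\bI_d=\S_{d,1}$, so the identity holds. This also justifies the closing remark of the corollary: since $\mat T_{d,1}=\bI_d$, the first factor $\mat T_{d,1}\otimes\bI_{d^{r-1}}=\bI_{d^r}$ and may be discarded.

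For the inductive step, assume $\S_{d,r}=\prod_{k=1}^r(\mat T_{d,k}\otimes\bI_{d^{r-k}})$. By Theorem~\ref{thm:1}, $\S_{d,r+1}=(\S_{d,r}\otimes\bI_d)\mat T_{d,r+1}$, so it suffices to push the trailing $\otimes\,\bI_d$ into the product supplied by the induction hypothesis. I would use two facts: first, the mixed-product property $(\mat A\mat B)\otimes\bI_d=(\mat A\otimes\bI_d)(\mat B\otimes\bI_d)$, valid here because all factors are square of compatible orders, which lets $\otimes\,\bI_d$ distribute across the $r$-fold matrix product; second, associativity of $\otimes$ together with $\bI_{d^{r-k}}\otimes\bI_d=\bI_{d^{r-k+1}}$, which rewrites each factor as $(\mat T_{d,k}\otimes\bI_{d^{r-k}})\otimes\bI_d=\mat T_{d,k}\otimes\bI_{d^{(r+1)-k}}$. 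Combining these gives $\S_{d,r}\otimes\bI_d=\prod_{k=1}^r(\mat T_{d,k}\otimes\bI_{d^{(r+1)-k}})$, and multiplying on the right by $\mat T_{d,r+1}=\mat T_{d,r+1}\otimes\bI_{d^0}$ yields exactly $\prod_{k=1}^{r+1}(\mat T_{d,k}\otimes\bI_{d^{(r+1)-k}})$, completing the induction.

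There is no real obstacle here: the argument is routine once Theorem~\ref{thm:1} is available. The only points demanding care are the boundary conventions invoked in the base case, and ensuring the distribution of $\otimes\,\bI_d$ over the product preserves the order of the factors, which the mixed-product property guarantees.
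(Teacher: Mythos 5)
Your proof is correct and follows essentially the same route as the paper: induction on $r$ with base case $\S_{d,1}=\bI_d=\mat T_{d,1}$, and an inductive step that applies Theorem~\ref{thm:1} together with the mixed-product property and $\bI_{d^{p}}\otimes\bI_d=\bI_{d^{p+1}}$ to absorb the trailing $\otimes\,\bI_d$ into each factor. You simply spell out the bookkeeping that the paper leaves implicit.
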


Corollary \ref{cor:1} suggests a straightforward recursive scheme provided a simple formula for
each of the factors $(\mat T_{d,k}\otimes \bI_{d^{r-k}})\boldsymbol v$ is available. We derive such
a formula in the next result.

\begin{cor}\label{cor:2}
Denote by $\tau_{jk}$ the transposition that interchanges the $j$-th and $k$-th coordinates of an index vector $(i_1,\dots,i_r)$ with $1\leq i_1,\dots,i_r\leq d$. For any vector $\boldsymbol v=(v_1,\dots,v_{d^r})\in\mathbb R^{d^r}$ and $k=2,\dots, r$, it is possible to express
$(\mat T_{d,k}\otimes \bI_{d^{r-k}})\boldsymbol v=\frac1k\sum_{j=1}^k\boldsymbol w_{p\circ\tau_{jk}\circ p^{-1}}$, where $\boldsymbol w_{p\circ\tau_{jk}\circ p^{-1}}\in\mathbb R^{d^r}$ is the vector whose $(p\circ\tau_{jk}\circ p^{-1})(i)$-th coordinate is $v_i$.
\end{cor}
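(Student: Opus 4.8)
The plan is to unwind the Kronecker structure of $\mat T_{d,k}\otimes\bI_{d^{r-k}}$ by tracking the effect of each summand on the basis of $\mathbb R^{d^r}$ indexed by $r$-tuples, and to show that each summand acts as a coordinate permutation built from a single transposition. Recall from Theorem~\ref{thm:1} that
$$\mat T_{d,k}=\frac1k\sum_{j=1}^k(\bI_{d^j}\otimes\mat K_{d^{k-j-1},d})(\bI_{d^{j-1}}\otimes\mat K_{d,d^{k-j}}),$$
so that $\mat T_{d,k}\otimes\bI_{d^{r-k}}=\frac1k\sum_{j=1}^k(\bI_{d^j}\otimes\mat K_{d^{k-j-1},d}\otimes\bI_{d^{r-k}})(\bI_{d^{j-1}}\otimes\mat K_{d,d^{k-j}}\otimes\bI_{d^{r-k}})$. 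First I would fix $j$ and compute the action of the $j$-th summand on a basis vector $\be_{i_1}\otimes\cdots\otimes\be_{i_r}$, using the defining property of the commutation matrix, namely $\mat K_{p,m}(\mat A\otimes\mat B)\mat K_{n,q}=\mat B\otimes\mat A$, or more directly its action on Kronecker products of vectors, $\mat K_{m,n}(\boldsymbol a\otimes\boldsymbol b)=\boldsymbol b\otimes\boldsymbol a$ for $\boldsymbol a\in\mathbb R^n$, $\boldsymbol b\in\mathbb R^m$. The key observation is that conjugating by commutation matrices padded with identities on both sides amounts to cyclically moving one tensor slot; composing the two factors appearing in the $j$-th summand produces exactly the transposition $\tau_{jk}$ of slots $j$ and $k$, leaving all other slots fixed.

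Concretely, the second factor $\bI_{d^{j-1}}\otimes\mat K_{d,d^{k-j}}\otimes\bI_{d^{r-k}}$ sends $\be_{i_1}\otimes\cdots\otimes\be_{i_r}$ to the tuple in which the block $(\be_{i_j},\be_{i_{j+1}},\dots,\be_{i_k})$ is replaced by $(\be_{i_{j+1}},\dots,\be_{i_k},\be_{i_j})$, i.e. slot $j$ is cycled to position $k$; the first factor $\bI_{d^j}\otimes\mat K_{d^{k-j-1},d}\otimes\bI_{d^{r-k}}$ then cycles the slot now sitting in position $k$ back down to position $j+1$, which in net effect swaps the contents of the original positions $j$ and $k$ and restores the intervening slots. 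I would verify this slot-bookkeeping carefully for, say, general $j<k$ and note the degenerate case $j=k$ gives the identity (consistent with $\mat K_{d^{-1},d}=1$). Transferring this from the action on basis vectors to the coordinatewise statement, the $j$-th summand sends $\boldsymbol v$ to the vector whose $(p\circ\tau_{jk}\circ p^{-1})(i)$-th coordinate equals $v_i$ — i.e. exactly $\boldsymbol w_{p\circ\tau_{jk}\circ p^{-1}}$ in the notation of the statement — because permuting the tensor slots corresponds under the bijection $p$ to permuting the integer indices by the same permutation. Averaging over $j=1,\dots,k$ yields the claimed formula.

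The main obstacle I anticipate is purely notational: getting the direction of the index permutation right, that is, distinguishing ``the coordinate $v_{\pi(i)}$ goes to position $i$'' from ``$v_i$ goes to position $\pi(i)$'', since commutation matrices and permutation actions compose in opposite orders depending on whether one thinks of them as acting on indices or on basis vectors. A clean way to sidestep ambiguity is to define the permutation action on coordinates once, say $(\mat P_\sigma \boldsymbol v)_i = v_{\sigma^{-1}(i)}$ or the transpose convention, fix it, and then compute $\mat K$-actions against that convention throughout; since $\tau_{jk}$ is an involution, $\tau_{jk}=\tau_{jk}^{-1}$, the final formula is insensitive to this choice, but the intermediate computation is not. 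A secondary, minor point is handling the boundary indices ($j=1$, $j=k$, $k=r$) where some of the padding identities $\bI_{d^0}=1$ collapse; these should be checked but introduce no real difficulty. Once the single-transposition claim is established, the conclusion is immediate from Corollary~\ref{cor:1}, since $\S_{d,r}\boldsymbol v$ is obtained by applying the factors $(\mat T_{d,k}\otimes\bI_{d^{r-k}})$ successively for $k=1,\dots,r$, each step now reduced to an average of $k$ cheap coordinate reshuffles.
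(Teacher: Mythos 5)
Your proposal is correct and takes essentially the same approach as the paper: reduce by linearity to the action of $\mat T_{d,k}\otimes\bI_{d^{r-k}}$ on basis products $\bigotimes_{\ell=1}^r\be_{i_\ell}$, show that the $j$-th summand of $\mat T_{d,k}$ swaps the $j$-th and $k$-th tensor slots (this is exactly the paper's identity (\ref{Tdke})), and pass to coordinates via the bijection $p$. The one point to repair is the orientation of your intermediate cycles: with the Magnus--Neudecker convention $\mat K_{p,m}(\boldsymbol a\otimes\boldsymbol b)=\boldsymbol b\otimes\boldsymbol a$ for $\boldsymbol a\in\mathbb R^{m}$, $\boldsymbol b\in\mathbb R^{p}$, the factor $\bI_{d^{j-1}}\otimes\mat K_{d,d^{k-j}}$ moves slot $k$ forward to position $j$ (not slot $j$ back to position $k$), and only with that orientation does composing with $\bI_{d^{j}}\otimes\mat K_{d^{k-j-1},d}$ yield the transposition $\tau_{jk}$ --- taken literally, your two cycles compose to $\tau_{j,j+1}$ instead; this is precisely the bookkeeping pitfall you flagged, and it does not affect the overall strategy.
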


From Corollary \ref{cor:2} it follows that, once the set $\mathcal{PR}_{d,r}=\{p^{-1}(i)\colon
i=1,\dots,d^r\}$ of permutations with repetitions has been obtained (see Appendix \ref{pinv}), the
vector $\S_{d,r}\boldsymbol v$ can be computed using just two nested loops with a small number of
iterations, namely for $k=2,\dots,r$ and $j=1,\dots,k$. This implementation is described in
Algorithm \ref{RecSdrv}. Again, we refer to Section \ref{sec:comparisons} for the comparison of the
computation times of the direct approach (based on Equation (\ref{Sdrv})) and the new recursive
Algorithm \ref{RecSdrv}.

\begin{algorithm}[h!t]
\caption{Recursive computation of $\S_{d,r}\boldsymbol v$}\label{RecSdrv}
\SetKwInOut{Input}{Input}\SetKwInOut{Output}{Output}
\Input{dimension $d$, order $r$, a vector $\boldsymbol v\in\mathbb R^{d^r}$}
\Output{The product $\boldsymbol w=\S_{d,r}\boldsymbol v$}
\begin{enumerate}
\item Set $\boldsymbol w_{old}=\boldsymbol v$
\item If $r\geq2$ then
\begin{enumerate}
\item Generate the set $\mathcal{PR}_{d,r}$ as described in Appendix \ref{pinv}
\item For $k$ in $2,\dots,r$:

\qquad Initialize $\boldsymbol w_{new}=\boldsymbol 0\in\mathbb R^{d^r}$

\qquad For $j$ in $1,\dots,k$:

\qquad\qquad Add $\boldsymbol w_{old}$ to the coordinates of $\boldsymbol w_{new}$ reordered according to

\qquad\qquad $p\circ\tau_{jk}\circ p^{-1}$ (as indicated in Corollary \ref{cor:2})

\qquad Set $\boldsymbol w_{old}=\boldsymbol w_{new}/k$
\end{enumerate}
\item Return $\boldsymbol w_{old}$
\end{enumerate}
\end{algorithm}

\section{Recursive computation of all the unique partial derivatives of the
multivariate Gaussian density}\label{sec:unique} Employing the vectorization $\D^{\otimes r}f$ to
encompass all the $r$-th order partial derivatives into a single vector is quite useful for a neat
theoretical analysis of quantities based on multivariate higher-order derivatives. For instance,
from the explicit formula for $\D^{\otimes r}\phi_\bSigma$ given in terms of the multivariate
Hermite polynomials, involving $\S_{d,r}$, \cite{Hol88} was able to derive explicit expressions for
the moments and cumulants of arbitrary order of the multivariate normal distribution, whereas all
the previous studies only presented tailored formulas for a few particular cases (see Section
\ref{sec:moment} below).

However, many of the partial derivative operators in the vector $\D^{\otimes r}$ may appear duplicated, due to Schwarz's theorem on the commutation of higher-order partial derivatives. Thus, it would be desirable in practice to avoid computing these elements more than once. For example, when commutation of partial derivatives of second order is allowed, it suffices to compute the terms $\partial^2/\partial x_i^2$ for $i=1,\dots,d$, and just $\partial^2/(\partial x_i\partial x_j)$ for $i<j$, to obtain the whole operator $\D^{\otimes 2}$. It is not necessary to compute the mixed partial derivatives for $i>j$.
We will refer to this reduced set of partial derivatives, that unequivocally determine the full derivative vector, as the `unique partial derivatives'.
By this phrase, we mean the set of partial derivatives with unique partial derivative {\it indices}.

This section makes use of this observation to introduce a different approach, in which a further
reduction in storage space and computation time is achieved by computing only the unique partial
derivatives of $\D^{\otimes r}\phi_\bSigma$ and re-distributing them later to form the full vector.

First, notice that each coordinate of the operator $\D^{\otimes r}$ can be written as
$\D_{\boldsymbol i}$ for some $\boldsymbol i\in\mathcal{PR}_{d,r}=\big\{(i_1,\dots,i_r)\colon
i_1,\dots,i_r\in\{1,\dots,d\}\big\}$, where
$$\D_{\boldsymbol i}=\frac{\partial ^r}{\partial x_{i_1}\cdots\partial x_{i_r}},$$
so that the index $i_j$ refers to the coordinate with respect to which the $j$-th partial
derivative is performed, for $1\leq j\leq r$. As noted in Section \ref{sec:symm}, the application
$p$ gives a one-to-one correspondence between $\mathcal{PR}_{d,r}$ and $\{1,\dots,d^r\}$ so that it
induces a natural ordering ${\boldsymbol i_1}=p^{-1}(1),\dots,\boldsymbol i_{d^r}=p^{-1}(d^r)$ in
$\mathcal{PR}_{d,r}$ (this correspondence is written down explicitly in Appendix \ref{pinv}). It is
not difficult to check that, in the formal expression of $\D^{\otimes r}$ as a Kronecker power, its
coordinates are arranged precisely in that order, that is,
$$\D^{\otimes r}=(\D_{\boldsymbol i_1},\dots,\D_{\boldsymbol i_{d^r}}).$$

Alternatively, when commutation of partial derivatives is possible, the
coordinates of $\D^{\otimes r}$ can also be written as $\D^{\boldsymbol m}$ for some $\boldsymbol m\in\mathcal I_{d,r}=\big\{(m_1,\dots,m_d)\colon0\leq
m_k\leq r,\,|\boldsymbol m|=r\big\}$, where $|\boldsymbol m|=\sum_{k=1}^dm_k$ and
$$\D^{\boldsymbol m}=\frac{\partial^{|\boldsymbol m|}}{\partial x_1^{m_1}\cdots\partial x_d^{m_d}}.$$
Therefore, here the index $m_k$ refers to the number of times that it is partially differentiated with respect to
$x_k$, for $1\leq k\leq d$.

It is clear that for a given $\boldsymbol i\in\mathcal{PR}_{d,r}$ the two definitions agree if $m_k$ is set to be the number of times that the $k$-th coordinate appears in $\boldsymbol i$; that is, $m_k=\sum_{j=1}^rI_{\{i_j=k\}}$. But for a given $\boldsymbol m\in\mathcal I_{d,r}$ there might be many
possible multi-indices $\boldsymbol i\in\mathcal{PR}_{d,r}$ such that $\D^{\boldsymbol m}=\D_{\boldsymbol i}$. This is because, provided partial differentiation commutation is possible, the set $\{\D^{\boldsymbol m}\colon\boldsymbol m\in\mathcal I_{d,r}\}$ contains the unique coordinates of
$\D^{\otimes r}$.

Moreover, it is not difficult to show (for instance, by induction on $d$) that the cardinality of
$\mathcal I_{d,r}$ is $N_{d,r}=|\mathcal I_{d,r}|=\binom{r+d-1}{r}$, which is usually much smaller
than $d^r$. So an efficient way to obtain $\D^{\otimes r}$ is to compute its unique $N_{d,r}$
elements $\{\D^{\boldsymbol m}\colon\boldsymbol m\in\mathcal I_{d,r}\}$ and then rearrange them to form
$\D^{\otimes r}$.

If all the unique partial derivatives $\{\D^{\boldsymbol m}\colon\boldsymbol m\in\mathcal I_{d,r}\}$ are
collected in a vector $\mathfrak D^r$ of length $N_{d,r}$, there is also a natural ordering according to
which its coordinates should be positioned. This ordering is induced by that of $\D^{\otimes
r}=(\D_{\boldsymbol i_1},\dots,\D_{\boldsymbol i_{d^r}})$ in a way such that any $\D^{\boldsymbol m}$ can be associated
with the first value of $j\in\{1,\dots,d^r\}$ such that $\D^{\boldsymbol m}=\D_{\boldsymbol i_j}$. For instance,
the first element of $\mathfrak D^r$ is necessarily $\D^{(r,0,\dots,0)}=\D_{(1,1,\dots,1)}=\D_{\boldsymbol i_1}$
and the last element of $\mathfrak D^r$ is necessarily $\D^{(0,\dots,0,r)}=\D_{(d,d,\dots,d)}=\D_{\boldsymbol i_{d^r}}$.

For any $\boldsymbol m\in\mathcal I_{d,r}$, \citet[Section 12.8]{Erd53} showed that $\D^{\boldsymbol m}\phi_\bSigma(\bx)$ can also be expressed with the
aid of a real-valued Hermite polynomial $\mathcal H^{\boldsymbol m}$ (remember that the bold font notation $\boldsymbol{\mathcal{H}}_r$ is reserved for the vector-valued Hermite polynomial introduced in (\ref{Herm})) in a way such that
$$\D^{\boldsymbol m}\phi_\bSigma(\bx)=(-1)^{|\boldsymbol m|}\phi_\bSigma(\bx)\mathcal H^{\boldsymbol m}(\bx;\bSigma).$$

So if we denote
by $\boldsymbol{\mathfrak H}_r(\bx;\bSigma)$ the vector of length $N_{d,r}$ containing as coordinates all the values $\{\mathcal H^{\boldsymbol m}(\bx;\bSigma)\colon\boldsymbol m\in\mathcal I_{d,r}\}$, arranged in the same order as the elements of
$\mathfrak D^r\phi_\bSigma(\bx)$, it is possible to write $\mathfrak D^r\phi_\bSigma(\bx)=(-1)^r\phi_\bSigma(\bx)\boldsymbol{\mathfrak H}_r(\bx;\bSigma)$.
Thus, by comparison with the definition of $\boldsymbol{\mathcal{H}}_r$, notice that the vector $\boldsymbol{\mathfrak H}_r(\bx;\bSigma)$ contains the unique coordinates of $(\bSigma^{-1})^{\otimes r} \Herm_r(\bx; \bSigma)$.

\citet[Theorem 4.1]{Sav06} showed $d$ recursive formulas that are useful to obtain every coordinate
of $\boldsymbol{\mathfrak H}_{r+1}(\bx;\bSigma)$ from some of the elements in
$\boldsymbol{\mathfrak H}_r(\bx;\bSigma)$ and $\boldsymbol{\mathfrak H}_{r-1}(\bx;\bSigma)$.
Namely, if $\mat V=\bSigma^{-1}=(v_{ij})_{i,j=1}^d$ and $\bz=\mat V\bx=(z_1,\dots,z_d)$ then, for
$j=1,2,\dots,d$, \citet{Sav06} showed that
\begin{equation}\label{recursive}
\mathcal H^{\boldsymbol m+\boldsymbol e_j}(\bx;\bSigma)=z_j\mathcal H^{\boldsymbol m}(\bx;\bSigma)-\sum_{k=1}^dv_{jk}m_k\mathcal H^{\boldsymbol m-\boldsymbol e_k}(\bx;\bSigma),
\end{equation}
where we follow the convention that $\mathcal H^{\boldsymbol m -\boldsymbol e_k}(\bx; \bSigma) = 1$ if $m_k=0$.

Here, an algorithm is proposed to obtain recursively the whole Hermite polynomial vector of unique
elements $\boldsymbol{\mathfrak H}_{r+1}(\bx;\bSigma)$ from $\boldsymbol{\mathfrak
H}_r(\bx;\bSigma)$ and $\boldsymbol{\mathfrak H}_{r-1}(\bx;\bSigma)$, thus maintaining the analogy
with the usual univariate recursive formula. In this vector form, the recursion starts with
$\boldsymbol{\mathfrak H}_0(\bx;\bSigma)=1$ and $\boldsymbol{\mathfrak
H}_1(\bx;\bSigma)=\bSigma^{-1}\bx$.

An obvious difficulty is that the Hermite polynomial vectors of different orders have different
lengths. Besides, if all the $d$ recursive formulas are applied to each element of $\boldsymbol{\mathfrak H}_r(\bx;\bSigma)$ then $dN_{d,r}$ elements are obtained and $dN_{d,r}>N_{d,r+1}$ if $r\geq1$, so
necessarily some of the obtained elements would be duplicated. Furthermore, it would be desirable, at each step of the recursion, that the newly obtained Hermite
polynomial vector keep the correct order of its coordinates.

A recursive procedure to compute the $N_{d,r+1}$-dimensional vector $\boldsymbol{\mathfrak
H}_{r+1}(\bx;\bSigma)$ from the $N_{d,r}$-dimensional vector $\boldsymbol{\mathfrak
H}_r(\bx;\bSigma)$ and the $N_{d,r-1}$-dimensional vector $\boldsymbol{\mathfrak
H}_{r-1}(\bx;\bSigma)$, using the recursive formulas (\ref{recursive}), reads as follows:
\begin{enumerate}
\item Using (\ref{recursive}) with $j=1$ it is possible to obtain all the Hermite polynomial
    values corresponding to $\{\boldsymbol m+\boldsymbol e_1\colon\boldsymbol m\in\mathcal I_{d,r}\}=\{\mat
    m\in\mathcal I_{d,r+1}\colon m_1\geq1\}$. There are $N_{d,r}$ of them, which are put in the
    first $N_{d,r}$ positions of $\boldsymbol{\mathfrak H}_{r+1}(\bx;\bSigma)$.

    It remains to compute the Hermite values corresponding to $\{\boldsymbol m\in\mathcal
    I_{d,r+1}\colon m_1=~0\}$, which can be expressed as $\{(0,m_2,\dots,m_d)\colon
    m_2+\dots+m_d=r+1\}$. There are, therefore, $N_{d-1,r+1}$ of them, which is the remaining number of coordinates of $\boldsymbol{\mathfrak H}_{r+1}(\bx;\bSigma)$ to fill in, since $N_{d,r+1}=N_{d,r}+N_{d-1,r+1}$, according to Pascal's rule.

\item Using (\ref{recursive}) with $j=2$ it is possible to obtain all the Hermite polynomial
    values corresponding to $\{(0,m_2,\dots,m_d)\colon m_2+\dots+m_d=r+1, m_2\geq1\}$.
    Reasoning as in the first step it is clear that there are $N_{d-1,r}$ of them, which are
    obtained by adding $\boldsymbol e_2$ to the multi-indices $\boldsymbol m\in\mathcal I_{d,r}$ of the form
    $\boldsymbol m=(0,m_2,\dots,m_d)$. Since, inductively, the first $N_{d,r-1}$ coordinates of the
    vector $\boldsymbol{\mathfrak H}_r(\bx;\bSigma)$ correspond to multi-indices $\boldsymbol m\in\mathcal I_{d,r}$
    with $m_1\geq1$, formula (\ref{recursive}) with $j=2$ should be applied to the remaining
    last $N_{d,r}-N_{d,r-1}=N_{d-1,r}$ coordinates of $\boldsymbol{\mathfrak H}_r(\bx;\bSigma)$ to keep the same coherent ordering in the coordinates of $\boldsymbol{\mathfrak H}_{r+1}(\bx;\bSigma)$.

    Moreover, since formula formula (\ref{recursive}) with $j=2$ is applied to
    multi-indices $\boldsymbol m\in\mathcal I_{d,r}$ of the form $\boldsymbol m=(0,m_2,\dots,m_d)$, it can be
    further simplified to take into account that $m_1=0$, yielding
    $$\mathcal H^{\boldsymbol m+\boldsymbol e_2}(\bx;\bSigma)=z_2\mathcal H^{\boldsymbol m}(\bx;\bSigma)-\sum_{k=2}^dv_{2k}m_k\mathcal H^{\boldsymbol m-\boldsymbol e_k}(\bx;\bSigma).$$

    It remains to compute the Hermite values corresponding to $\{\boldsymbol m\in\mathcal
    I_{d,r+1}\colon m_1=m_2=0\}$, which can be expressed as $\{(0,0,m_3,\dots,m_d)\colon
    m_3+\dots+m_d=r+1\}$. There are, therefore, $N_{d-2,r+1}$ of them, which is the remaining number of coordinates of $\boldsymbol{\mathfrak H}_{r+1}(\bx;\bSigma)$ to fill in, because $N_{d,r+1}=N_{d,r}+N_{d-1,r}+N_{d-2,r+1}$, according to Pascal's rule.

\item After the $(d-1)$-th step, the first $\sum_{j=1}^{d-1}N_{d-j+1,r}$
    coordinates of $\boldsymbol{\mathfrak H}_{r+1}(\bx;\bSigma)$ have been computed, and since
    $N_{d,r+1}=1+\sum_{j=1}^{d-1}N_{d-j+1,r}$ by successive application of Pascal's rule, the
    only coordinate left is the last one, corresponding to $\boldsymbol m=(0,\dots,0,r+1)\in\mathcal
    I_{d,r+1}$. To compute it we just apply the iterative formula with $j=d$ to the last
    element of $\boldsymbol{\mathfrak H}_{r}(\bx;\bSigma)$, which corresponds to $(0,\dots,0,r)\in\mathcal I_{d,r}$,
    which in this case simplifies to
    $$\mathcal H^{(0,\dots,0,r+1)}(\bx;\bSigma)=z_d\mathcal H^{(0,\dots,0,r)}(\bx;\bSigma)-v_{dd}r\mathcal H^{(0,\dots,0,r-1)}(\bx;\bSigma).$$
\end{enumerate}

The previous steps have been merged into Algorithm \ref{unique} to derive a novel recursive procedure to compute $\D^{\otimes r}\phi_\bSigma(\bx)$.

\begin{algorithm}[h!t]
\caption{Recursive computation of $\D^{\otimes r}\phi_\bSigma(\bx)$}\label{unique}
\SetKwInOut{Input}{Input}\SetKwInOut{Output}{Output}
\Input{vector $\bx\in\mathbb R^d$, $d\times d$ matrix $\bSigma$, order $r$}
\Output{The vector $\D^{\otimes r}\phi_\bSigma(\bx)\in\mathbb R^{d^r}$}
\begin{enumerate}
\item Set $\boldsymbol{\mathfrak H}_0(\bx;\bSigma)=1$ and $\boldsymbol{\mathfrak H}_1(\bx;\bSigma)=\bSigma^{-1}\bx$
\item If $r\geq2$ then, for $k$ in $2,\dots,r$:

\qquad Proceed as in steps 1--3 in the text to obtain $\boldsymbol{\mathfrak H}_k(\bx;\bSigma)$

\qquad from $\boldsymbol{\mathfrak H}_{k-1}(\bx;\bSigma)$ and $\boldsymbol{\mathfrak H}_{k-2}(\bx;\bSigma)$

\item Distribute the elements of $\boldsymbol{\mathfrak H}_r(\bx;\bSigma)$ to form $ (\bSigma^{-1})^{\otimes r} \Herm_r(\bx;\bSigma)$
\item Return $\D^{\otimes r} \phi_\bSigma( \bx) = (-1)^r (\bSigma^{-1})^{\otimes r} \Herm_r(\bx; \bSigma) \phi_\bSigma( \bx)$
\end{enumerate}
\end{algorithm}

A natural competitor of this algorithm, also in recursive form, but based on the computation of the whole Hermite vector polynomial and not only its unique coordinates, can be derived from Theorem 7.2 in \citet{Hol96a}. From this theorem, it follows that the vectors $\boldsymbol u_k=(\bSigma^{-1})^{\otimes k}\Herm_k(\bx;\bSigma)$ satisfy the recurrence relation
\begin{equation}\label{recurrence}
\boldsymbol u_k=\S_{d,k}\big[(\bSigma^{-1}\bx)\otimes\boldsymbol u_{k-1}-(k-1)\big\{(\vec\bSigma^{-1})\otimes\boldsymbol u_{k-2}\big\}\big].
\end{equation}
Therefore, a straightforward recursive implementation of the previous formula, making use of Algorithm \ref{RecSdrv} to calculate (\ref{recurrence}), allows to obtain  $\D^{\otimes r}\phi_\bSigma(\bx)=(-1)^r\boldsymbol u_r\phi_\bSigma(\bx)$. The performance of these two recursive algorithms as well as the direct alternative is investigated in Section \ref{sec:comparisons} below.

\section{Applications to selected statistical problems}\label{sec:app}

The multivariate Gaussian density function plays a key role in many statistical problems. A number of them need not only the function itself, but some of its higher-order derivatives. In this section, we illustrate how the previous methods can be used to deal with some selected situations; the performance of the many possible algorithms arising from the application of the previous recursive techniques to each of these problems is discussed in Section \ref{sec:comparisons}.

\subsection{Moments of Gaussian random variables}\label{sec:moment}

Perhaps the most widely studied Gaussian-based scalar functions are the moments of the multivariate normal distribution and the expected values of quadratic forms in normal random variables. Many algorithms have been proposed to compute these, which are too numerous to cite all here. Surely the earliest is \citet{Iss18}, but more recent attempts include \citet{Kum73,Mag79,Gha95,Hol88,Hol96b,Tri03,Kan08} and \citet{Phi10}. As noted before, the advantage of the approach of \cite{Hol88,Hol96b} is that it produces concise explicit expressions using the symmetrizer matrix, with its corresponding computational disadvantage. The other references tend to focus on more efficient algorithmic approaches where the underlying structure is obscured, making them less amenable for
further mathematical analysis. To this end, we wish to derive algorithms which are both computationally efficient and mathematically tractable.

For $\bX \sim N_d(\bmu, \bSigma)$ a $d$-variate Gaussian random variable with mean $\bmu$ and variance $\bSigma$, its raw vector moment of order $r$ is defined as $\bmu_r=\mathbb E(\bX^{\otimes r})\in\mathbb R^{d^r}$. \cite{Hol88} showed that an explicit formula for this vector moment for an arbitrary order $r$ is given by
\begin{equation}\label{mur}
\bmu_r = r! \S_{d,r}\sum_{j=0}^{[r/2]}  \frac{1}{j! (r-2j)!2^j}
 \big\{\bmu^{\otimes (r - 2j)} \otimes  (\vec \bSigma)^{\otimes j} \big\}.
\end{equation}
Further, \citet[Equation (9.2)]{Hol96a} noted that the resemblance between the previous expression and the definition of the multivariate Hermite polynomial (\ref{Herm}) can be expressed as
\begin{equation}\label{muHerm}
\bmu_r= \Herm_r(\bmu; -\bSigma).
\end{equation}
Although the matrix $\bSigma$ in the definition of the vector Hermite polynomial needs to be positive definite so that all the formulas have a well-defined probabilistic interpretation, \citet{Hol96a} showed that (\ref{muHerm}) remains valid even if $-\bSigma$ is negative definite. Therefore, the vector moment $\bmu_r$ can be efficiently computed using the algorithms introduced in the previous sections.

Many authors, as for instance \citet{Tri03,Kan08} or \citet{Phi10}, focus instead on real-valued moments $\mu_{\boldsymbol i}=\mathbb E(X_{i_1}\cdots X_{i_r})$, where $\bX=(X_1,\dots,X_d)$ and $\boldsymbol i=(i_1,\dots,i_r)\in\mathcal{PR}_{d,r}$. The vector moment $\bmu_r$ contains all these real-valued moments as its coordinates (some of them even duplicated), but the main objection that these authors make about this vector moment formulation is about the difficulties encountered at the time of computing the symmetrizer matrix involved in (\ref{mur}), so they propose different alternatives to compute a single one of these real-valued moments. The approach described above overcomes these difficulties and allows to readily obtain all the real-valued moments at once by computing the whole vector moment.

\subsection{Quadratic forms in Gaussian random variables}
\label{sec:quad}

A closely related problem is that of computing the mixed moment of orders $(r,s)$ of two quadratic forms in normal variables, defined as
$$\nu_{r,s}(\mat A, \mat B)\equiv\nu_{r,s} (\mat A, \mat B; \bmu, \bSigma) =  \E [(\bX^\top \mat A \bX)^r(\bX^\top \mat B \bX)^s],$$
where $\mat A, \mat B$ are both $d \times d$ symmetric matrices and $\bX\sim N_d(\bmu,\bSigma)$. Note that by taking $s=0$ and $\mat B=\bI_d$ (say), the previous functional reduces to the $r$-th moment of a single quadratic form in normal variables, which will be denoted as $\nu_r(\mat A)\equiv\nu_r (\mat A; \bmu, \bSigma) =  \E [(\bX^\top \mat A \bX)^r]$.

The connection between these functionals and the vector moments of the multivariate normal distribution was highlighted by \citet{Hol96b}, who noted that
\begin{equation}\label{quadmom}
\nu_{r,s} (\mat A, \mat B)=
[(\vec^\top \mat A)^{\otimes r} \otimes (\vec^\top \mat B)^{\otimes s}] \bmu_{2r+2s}
\end{equation}
and, consequently, $\nu_{r} (\mat A)=(\vec^\top \mat A)^{\otimes r} \bmu_{2r}$. This Kronecker product
form has the advantage of decoupling the deterministic matrix product $(\vec \mat A)^{\otimes r}$ from
the raw moment $\bmu_{2r}$ of the random vector $\bX$. Moreover, Equation (\ref{quadmom}) makes it immediate to obtain a general formula for $\nu_{r,s} (\mat A, \mat B)$ for arbitrary orders $(r,s)$ from (\ref{mur}), as shown in Theorems 2 and 8 of \citet{Hol96b}, and, besides, it makes the advantage of using vector moments (as opposite to real-valued moments) more apparent. Furthermore, it also suggests a straightforward way to apply the efficient procedures for the computation of $\bmu_{2r+2s}$ in Section \ref{sec:moment} to obtain $\nu_{r,s} (\mat A, \mat B)$.

However, even if Equation (\ref{quadmom}) relates the two types of moments in a simple way, these two moments are quite different in nature. Whereas $\bmu_{2r+2s}$ is a high-dimensional vector, $\nu_{r,s} (\mat A, \mat B)$ is a scalar, so in this case it might be preferable to use an alternative recursive implementation not relying on the computation of such a high-dimensional vector.

The classical alternative approach is based on recursive relation between cumulants and lower order
$\nu$ functionals. Recall that when the cumulant generating function $\psi(t)=\log\mathbb
E[\exp\{tY\}]$ of a real random variable $Y$ is $r$-times differentiable, its $r$-th cumulant is
defined as $\psi^{(r)}(0)$ for $r\geq1$. \citet[Theorem~3.2b.2]{MP92} asserted that for $r\geq 1$,
$$
\nu_r(\mat A) = \sum_{i=0}^{r-1} \binom{r-1}{i} \kappa_{r-i}(\mat A) \nu_i(\mat A)
$$
where $\kappa_r(\mat A)$ is the $r$-th cumulant of the random variable $\bX^\top \mat A\bX$, given
by
$$\kappa_r(\mat A)\equiv\kappa_r(\mat A; \bmu, \bSigma) = 2^{r-1} (r-1)! \big [ \tr \{(\mat A \bSigma)^r\} + r \bmu^\top (\mat A \bSigma)^{r-1} \mat A \bmu
\big ].$$ The recursion starts with $\nu_0(\mat A) = 1$.

For the mixed moment $\nu_{r,s}(\mat A, \mat B) $, \citet[Equation (10)]{Smi95} showed that
\begin{equation}\label{Smith}
\nu_{r,s}(\mat A, \mat B) = \sum_{i=0}^{r} \sum_{j=0}^{s-1} \binom{r}{i} \binom{s-1}{j}
\kappa_{r-i, s-j}(\mat A, \mat B) \nu_{i,j}(\mat A, \mat B)
\end{equation}
where  $\kappa_{r,s}(\mat A,\mat B)$ is the joint $(r,s)$-th cumulant of $\bX^\top \mat A\bX$ and
$\bX^\top \mat B\bX$, which is defined as the value at $(0,0)$ of the $(r,s)$-th order partial
derivative of the joint cumulant generating function $\psi(t_1,t_2)=\log\mathbb E[\exp\{t_1\bX^\top
\mat A\bX+t_2\bX^\top \mat B\bX\}]$ for $r+s\geq1$.

\citet[Theorem~3.3.4 and Corollary 3.3.1]{MP92} provided a concise formula for $\kappa_{r,s}(\mat
A,\mat B)$ without an explicit proof. Unfortunately, although their formula is correct for some particular cases, it is wrong in general (see further details in Section \ref{sec:proof-comp}
below). We provide the correct formula for the cumulant $\kappa_{r,s}(\mat A,\mat B)$ in Theorem~\ref{jointcumulant} below.

Let us denote by $\mathcal{MP}_{r,s}$ the set of permutations of the multiset having $r$ copies of
1 and $s$ copies of 2; that is,
$$\mathcal{MP}_{r,s}=\big\{\boldsymbol i=(i_1,\dots,i_{r+s})\in\{1,2\}^{r+s}\colon n_1(\boldsymbol i)=r,\ n_2(\boldsymbol i)=s\big\},$$
where $n_\ell(\boldsymbol i)$ denotes the number of times that $\ell$ appears in $\boldsymbol i$,
for $\ell=1,2$; i.e., $n_\ell(\boldsymbol i)=\sum_{k=1}^{r+s}I_{\{i_k=\ell\}}$. Recall that the
cardinality of $\mathcal{MP}_{r,s}$ is $(r+s)!\big/(r!s!)$.

\begin{theorem}\label{jointcumulant}
For $r+s\geq1$, the joint cumulant of order $(r,s)$ of $\bX^\top \mat A\bX$ and $\bX^\top \mat B\bX$ is given by
\begin{align*}
\kappa_{r,s}(\mat A,\mat B)=2^{r+s-1}r!s!\sum_{\boldsymbol i\in\mathcal{MP}_{r,s}}\tr\big[\mat F_{i_1}\cdots\mat F_{i_{r+s}}\big\{\bI_d/(r+s)+\bSigma^{-1}\bmu\bmu^\top\big\}\big],
\end{align*}
where $\mat F_1=\mat A\bSigma$ and $\mat F_2=\mat B\bSigma$.
\end{theorem}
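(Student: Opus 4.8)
The plan is to obtain the joint cumulant generating function of $\bX^\top\mat A\bX$ and $\bX^\top\mat B\bX$ in closed form, expand it as a power series in $(t_1,t_2)$ about the origin, and read off the coefficient of $t_1^r t_2^s$. To that end I would first record the joint moment generating function: for $\bX\sim N_d(\bmu,\bSigma)$ and a symmetric matrix $\mat W$ for which $2\mat W\bSigma$ has spectral radius below $1$, a standard completion of the square in the Gaussian integral gives
\[
\E\big[\exp\{\bX^\top\mat W\bX\}\big]=|\bI_d-2\mat W\bSigma|^{-1/2}\exp\big\{\bmu^\top\mat W(\bI_d-2\bSigma\mat W)^{-1}\bmu\big\}.
\]
Setting $\mat W=t_1\mat A+t_2\mat B$ for $(t_1,t_2)$ in a neighbourhood of $(0,0)$ and taking logarithms yields
\[
\psi(t_1,t_2)=-\tfrac{1}{2}\log|\bI_d-2\mat W\bSigma|+\bmu^\top\mat W(\bI_d-2\bSigma\mat W)^{-1}\bmu,
\]
and since the joint cumulants depend only on the behaviour of $\psi$ near the origin it is legitimate to work with this analytic expression on a small neighbourhood where the series below converge.

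Next I would expand the two summands. Using $-\log|\bI_d-X|=\sum_{\ell\geq1}\ell^{-1}\tr(X^\ell)$ with $X=2\mat W\bSigma$, and $(\bI_d-X)^{-1}=\sum_{\ell\geq0}X^\ell$ with $X=2\bSigma\mat W$, together with the cyclic invariance of the trace and the identity $\mat W\bmu=\mat W\bSigma\,\bSigma^{-1}\bmu$ (which turns $\bmu^\top\mat W(\bSigma\mat W)^{\ell-1}\bmu$ into $\tr\{(\mat W\bSigma)^{\ell}\bSigma^{-1}\bmu\bmu^\top\}$), the two pieces collapse into the single series
\[
\psi(t_1,t_2)=\sum_{\ell\geq1}2^{\ell-1}\tr\Big((\mat W\bSigma)^\ell\big[\tfrac{1}{\ell}\bI_d+\bSigma^{-1}\bmu\bmu^\top\big]\Big).
\]

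To extract the $(r,s)$ coefficient, write $\mat F_1=\mat A\bSigma$ and $\mat F_2=\mat B\bSigma$ so that $\mat W\bSigma=t_1\mat F_1+t_2\mat F_2$, and expand the noncommutative power as $(\mat W\bSigma)^\ell=\sum_{\boldsymbol i\in\{1,2\}^\ell}t_{i_1}\cdots t_{i_\ell}\,\mat F_{i_1}\cdots\mat F_{i_\ell}$. A word $\boldsymbol i$ contributes the monomial $t_1^r t_2^s$ exactly when $\ell=r+s$ and $\boldsymbol i\in\mathcal{MP}_{r,s}$, so the coefficient of $t_1^r t_2^s$ in $\psi$ is $2^{r+s-1}\tr\big(\sum_{\boldsymbol i\in\mathcal{MP}_{r,s}}\mat F_{i_1}\cdots\mat F_{i_{r+s}}\big[\tfrac{1}{r+s}\bI_d+\bSigma^{-1}\bmu\bmu^\top\big]\big)$. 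Since $\kappa_{r,s}(\mat A,\mat B)=\partial_{t_1}^{r}\partial_{t_2}^{s}\psi(0,0)$ equals $r!\,s!$ times that coefficient, the claimed identity follows (one may also check consistency with $\kappa_r(\mat A)$ by taking $s=0$ and $\mat B=\bI_d$, since then $\mathcal{MP}_{r,0}$ is a singleton).

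The one genuinely delicate point — and exactly where the formula of \citet{MP92} goes astray — is the bookkeeping in the noncommutative expansion of $(t_1\mat F_1+t_2\mat F_2)^{\ell}$: one must retain all $\binom{r+s}{r}$ interleavings of the $\mat F_1$'s and $\mat F_2$'s rather than collapsing them into the single word $\mat F_1^{r}\mat F_2^{s}$ (or a multinomial-coefficient surrogate), which would be valid only if $\mat A\bSigma$ and $\mat B\bSigma$ commuted. The remaining steps are routine: justifying termwise differentiation by analyticity of $\psi$ near $(0,0)$, checking the radius of convergence, and tracking the signs and the powers of $2$ in the expansion.
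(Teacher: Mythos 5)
Your argument is correct, and it reaches Theorem~\ref{jointcumulant} by a genuinely different route from the paper. The paper takes the closed form of the joint cumulant generating function from \citet{Mag86} as its starting point and then computes $\partial_{t_1}^r\partial_{t_2}^s\psi(0,0)$ by repeated matrix differentiation: the differential identity $d\mat C_i=2\mat C_j\mat C_i\,dt_j$ propagates through products via (\ref{dprod}), and the heart of the proof is Lemma~\ref{difw}, established by induction on $s$ through the combinatorial observation that the multiset $\mathcal A_{r,s}$ of all insertions of a $2$ into words of $\mathcal{MP}_{r,s-1}$ consists of exactly $s$ copies of each word in $\mathcal{MP}_{r,s}$; a final symmetrization in $(r,s)$ handles the log-determinant term. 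You instead derive the moment generating function directly by completing the square, fold the log-determinant and the quadratic-form exponent into the single series $\sum_{\ell\geq1}2^{\ell-1}\tr\big((\mat W\bSigma)^\ell\{\ell^{-1}\bI_d+\bSigma^{-1}\bmu\bmu^\top\}\big)$, and read off the coefficient of $t_1^rt_2^s$ from the noncommutative expansion of $(t_1\mat F_1+t_2\mat F_2)^{r+s}$, where $\mathcal{MP}_{r,s}$ appears simply as the set of words with $r$ ones and $s$ twos --- no induction and no differential calculus. I verified your intermediate identities (in particular $\bmu^\top\mat W(\bSigma\mat W)^{\ell-1}\bmu=\tr\{(\mat W\bSigma)^\ell\bSigma^{-1}\bmu\bmu^\top\}$ and the cancellation of the constant $\tfrac12\bmu^\top\bSigma^{-1}\bmu$ against the $\ell=0$ term of the resolvent series), and your $s=0$ consistency check against $\kappa_r(\mat A)$ goes through. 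What your approach buys is a self-contained, one-pass derivation that makes the appearance of all $\binom{r+s}{r}$ interleavings --- the precise point where the formula of \citet{MP92} fails --- completely transparent; what it costs is the (routine, as you note) justification of termwise differentiation of the trace power series on a neighbourhood of the origin where $2\mat W\bSigma$ has spectral radius below one, whereas the paper's differential-calculus route works with exact finite expressions at every stage.
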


The combination of the correct formula for $\kappa_{r,s}(\mat A,\mat B)$ with (\ref{Smith}) results
in a straightforward recursive algorithm for the computation of $\nu_{r,s}(\mat A, \mat B)$, whose performance is reported in Section \ref{sec:comparisons}.

Upon visual inspection, these $\nu$ functionals are composed of various traces of products of $\mat A,
\mat B$ and $\bSigma$, and quadratic forms of these products in $\bmu$. Because there exist many
results for, say, the differential analysis of these scalar functions, the subsequent differential
analysis is no more difficult than the original form in terms of symmetrizer matrices. 

\subsection{Analysis of Gaussian kernel-based non-parametric data smoothers}
\label{sec:kde}

A general goal of non-parametric data smoothing is to generate smooth visualizations of discretized
data for exploratory data analysis, e.g. see \cite{Sim96} for an overview. Let $\bX_1, \bX_2, \dots, \bX_n$ be a random sample drawn from a common density $f$. The kernel
density estimator of $\D^{\otimes r} f$ with Gaussian kernel is $\D^{\otimes r} \hat{f}_\bH (\bx)
= n^{-1} \sum_{i=1}^n \D^{\otimes r}\phi_\bH(\bx - \bX_j)$, where $\bH$ is a positive definite
bandwidth matrix. Hence, all the techniques introduced in the previous sections are quite useful to
obtain an efficient implementation of this estimator in practice.

It should be noted that the computation of this kernel density derivative estimator can be expedited
using different and complementary approaches to ours if the bandwidth matrix $\bH$ is constrained
to being a diagonal matrix; e.g. the binned kernel estimators of \citet{Wan94}, and the fast Gauss
transform based estimators of \citet{RDZ10}. On the other hand, our goal is to produce efficient
algorithms for use with maximally general unconstrained bandwidth matrices.

The crucial factor in the performance of a kernel estimator is the selection of the bandwidth
matrix $\bH$ of smoothing parameters. The mean integrated squared error (MISE) of the kernel
density derivative estimator is defined as
$$\mathrm{MISE}_r (\bH) = \E \int \|\D^{\otimes r}\hat{f}_\bH (\bx) - \D^{\otimes r}f(\bx)\|^2 \,
d\bx,$$ where $\|\cdot\|$ denotes the usual Euclidean norm. Under suitable regularity conditions,
\citet{CDW11} showed that as $n\to\infty$ the MISE can be approximated by
$${\rm AMISE}_r(\bH)= n^{-1}|\bH|^{-1/2}\tr\big((\bH^{-1})^{\otimes r}\mathbf R(\D^{\otimes r}\phi)\big) +
\tfrac{(-1)^r}{4} [(\vec^\top\bI_{d^r}) \otimes (\vec^\top \bH)^{\otimes 2}] \bpsi_{2r+4},$$ where
$\mat R(\mat g) = \int \mat g(\bx) \mat g(\bx)^\top \ d\bx$ for a vector-valued function $\mat g$,
and $\bpsi_{s} = \int \D^{\otimes s} f(\bx) f(\bx) \, d\bx$. Thus the minimizer of ${\rm AMISE}_r$
is a bandwidth matrix with an asymptotically optimal $L_2$ risk.

The usual approach to select the bandwidth matrix $\bH$ from the data is based on first estimating
the MISE using the data sample, and then selecting the bandwidth that minimizes the obtained
estimate of the MISE. Here, the step regarding the estimation of the MISE typically involves the
computation of $V$-statistics of degree 2 based on higher order derivatives of the Gaussian density
function. For instance, the three methods for bandwidth selection proposed in \citet{CD12} are
based, respectively, on the following three estimators of the MISE
\begin{align*}
\CV_r(\bH)&=(-1)^r\vec^\top\bI_{d^r}\bigg\{n^{-2}\sum_{i,j=1}^n\D^{\otimes2r}\phi_{2\bH}(\bX_i-\bX_j)-2[n(n-1)]^{-1}\sum_{i\neq j}\D^{\otimes 2r}\phi_\bH(\bX_i-\bX_j)\bigg\}\\
\PI_r(\bH)&= n^{-1}|\bH|^{-1/2}\tr\big\{(\bH^{-1})^{\otimes r}\mathbf R(\D^{\otimes r}\phi)\big\} +
\tfrac{(-1)^r}{4} [(\vec^\top\bI_{d})^{\otimes r} \otimes (\vec^\top \bH)^{\otimes 2}] \hat \bpsi_{2r+4}(\bG)\\
\SCV_r(\bH)&=n^{-1}|\bH|^{-1/2}\tr\big\{(\bH^{-1})^{\otimes r}\mathbf R(\D^{\otimes r}\phi)\big\} \\
&\quad+(-1)^r\vec^\top\bI_{d^r}n^{-2}\sum_{i,j=1}^n\D^{\otimes2r}\big\{\phi_{2\bH+2\bG}-2\phi_{\bH+2\bG}+\phi_{2\bG}\big\}(\bX_i-\bX_j)
\end{align*}
where $\hat{\bpsi}_{s} (\bG) = n^{-2} \sum_{i,j=1}^n \D^{\otimes s} \phi_\bG (\bX_i - \bX_j)$ is a
kernel estimator of $\bpsi_s$ for a given even number $s$, based on a pilot bandwidth matrix $\bG$. These estimators of the MISE
are commonly referred to as cross validation, plug-in and smoothed cross validation criteria,
respectively.

The zero-th order derivative case poses little problem for computation. However, if higher order
derivatives are considered, we quickly run into computational difficulties. \citet{Lin10} reduced
the computational burden of general $U$-statistics by aggregating $U$-statistics of random
sub-samples. Here, a different approach is taken by seeking computationally efficient forms for the
full sample, restricted to $V$-statistics of degree 2 based on derivatives of the Gaussian density
function.

Let us denote $\eta_{r,s}(\bx; \mat B, \bSigma) = [(\vec^\top \bI_d)^{\otimes r} \otimes (\vec^\top
\mat B)^{\otimes s}] \allowbreak \D^{\otimes 2r+2s} \phi_{\bSigma} (\bx)$ for a $d \times d$
symmetric matrix $\mat B$. Define also $\eta_{r}(\bx; \bSigma) \equiv \eta_{r,0}(\bx; \bI_d,
\bSigma)  = (\vec^\top \bI_d)^{\otimes r} \D^{\otimes 2r} \phi_{\bSigma} (\bx)$. It is easy to show
that the previous bandwidth selection criteria can be expressed using these functions, so that
\begin{align*}
\CV_r(\bH)&=(-1)^r \bigg\{n^{-2}\sum_{i,j=1}^n \eta_{r} (\bX_i-\bX_j; 2\bH)
-2[n(n-1)]^{-1}\sum_{i\neq j} \eta_{r} (\bX_i-\bX_j; \bH) \bigg\},\\
\mathrm{PI}_r(\bH) &=2^{-(d+r)} \pi^{-d/2}  n^{-1}|\bH|^{-1/2} \nu_r(\bH^{-1}; \boldsymbol{0},\bI_d)
+ \tfrac{(-1)^r}{4} n^{-2}\sum_{i,j=1}^n \eta_{r, 2}(\bX_i - \bX_j; \bH, \bG),\\
\mathrm{SCV}_r(\bH) &= 2^{-(d+r)} \pi^{-d/2}n^{-1}|\bH|^{-1/2}  \nu_r(\bH^{-1};\boldsymbol{0}, \bI_d)
+ (-1)^r n^{-2} \sum_{i,j=1}^n \big\{\eta_{r}(\bX_i - \bX_j; 2\bH+2\bG) \\
&\quad -2\eta_{r}(\bX_i - \bX_j; \bH+2\bG) + \eta_{r}(\bX_i - \bX_j; 2\bG)\big\},
\end{align*}
where it should be noted that the equivalence in the first term of the plug-in and smoothed cross
validation criteria follows from Lemma 3.$iv)$ in \citet{CDW11}.

Thus, the key for an efficient implementation of these criterion is to develop a fast recursive algorithm to compute the $\eta$ functionals. All the developments in the previous sections can be used for this goal by taking into account the following new result.

\begin{theorem}
\label{thm:eta} For a fixed $\bx$, the previous $\eta$ functionals are related to the $\nu$ functionals as follows
\begin{align*}
\eta_{r} (\bx; \bSigma)
&= \phi_{\bSigma} (\bx) \nu_r\big(\bI_d; \bSigma^{-1} \bx , -\bSigma^{-1}\big) \\
\eta_{r,s} (\bx; \mat B, \bSigma) &=
\phi_{\bSigma} (\bx) \nu_{r,s}\big(\bI_d, \mat B; \bSigma^{-1} \bx , -\bSigma^{-1}\big).
\end{align*}
\end{theorem}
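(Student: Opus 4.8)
The plan is to collapse both $\eta$ functionals onto the scalar quadratic-form functionals by inserting the Hermite factorisation~(\ref{Drphi}) into the definition of $\eta_{r,s}$ and then re-reading the outcome through the moment identity~(\ref{muHerm}) and the Kronecker-product identity~(\ref{quadmom}). Since $\eta_r(\bx;\bSigma)=\eta_{r,0}(\bx;\bI_d,\bSigma)$ and $\nu_r(\mat A)=\nu_{r,0}(\mat A,\bI_d)$, it suffices to establish the identity for $\eta_{r,s}$; the one for $\eta_r$ is the special case $s=0$, $\mat B=\bI_d$.

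First I would substitute $\D^{\otimes 2r+2s}\phi_\bSigma(\bx)=(\bSigma^{-1})^{\otimes(2r+2s)}\Herm_{2r+2s}(\bx;\bSigma)\phi_\bSigma(\bx)$ --- which is~(\ref{Drphi}) with the sign $(-1)^{2r+2s}=1$ --- into $\eta_{r,s}(\bx;\mat B,\bSigma)=[(\vec^\top\bI_d)^{\otimes r}\otimes(\vec^\top\mat B)^{\otimes s}]\D^{\otimes 2r+2s}\phi_\bSigma(\bx)$, which yields
$$\eta_{r,s}(\bx;\mat B,\bSigma)=\phi_\bSigma(\bx)\,[(\vec^\top\bI_d)^{\otimes r}\otimes(\vec^\top\mat B)^{\otimes s}]\,(\bSigma^{-1})^{\otimes(2r+2s)}\,\Herm_{2r+2s}(\bx;\bSigma).$$
The crux is then the scaling identity
$$(\bSigma^{-1})^{\otimes k}\,\Herm_k(\bx;\bSigma)=\Herm_k(\bSigma^{-1}\bx;\bSigma^{-1}),\qquad k\geq0,$$
which I would prove directly from the definition~(\ref{Herm}). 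Because $\S_{d,k}$ commutes with $\mat M^{\otimes k}$ for any $d\times d$ matrix $\mat M$ (clear from the symmetrising action of $\S_{d,k}$ on Kronecker products, cf.~(\ref{expdef})), one may move $(\bSigma^{-1})^{\otimes k}$ past $\S_{d,k}$; then, splitting $(\bSigma^{-1})^{\otimes k}=(\bSigma^{-1})^{\otimes(k-2j)}\otimes[(\bSigma^{-1})^{\otimes 2}]^{\otimes j}$ and applying the mixed-product rule block by block, the $j$-th summand of $\Herm_k(\bx;\bSigma)$ is mapped to $(\bSigma^{-1}\bx)^{\otimes(k-2j)}\otimes(\vec\bSigma^{-1})^{\otimes j}$, using only $(\bSigma^{-1})^{\otimes(k-2j)}\bx^{\otimes(k-2j)}=(\bSigma^{-1}\bx)^{\otimes(k-2j)}$ and $(\bSigma^{-1}\otimes\bSigma^{-1})\vec\bSigma=\vec(\bSigma^{-1}\bSigma\bSigma^{-1})=\vec\bSigma^{-1}$ (here $\bSigma^{-1}$ is symmetric). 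Summing over $j$ with the coefficients in~(\ref{Herm}) reproduces $\Herm_k(\bSigma^{-1}\bx;\bSigma^{-1})$.

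Taking $k=2r+2s$, the displayed expression becomes $\phi_\bSigma(\bx)\,[(\vec^\top\bI_d)^{\otimes r}\otimes(\vec^\top\mat B)^{\otimes s}]\,\Herm_{2r+2s}(\bSigma^{-1}\bx;\bSigma^{-1})$. By~(\ref{muHerm}), in its algebraic (polynomial) form --- legitimate even though $-\bSigma^{-1}$ is negative definite, as noted after~(\ref{muHerm}), and because~(\ref{mur}) defines the vector moment for any symmetric second argument --- the vector $\Herm_{2r+2s}(\bSigma^{-1}\bx;\bSigma^{-1})$ is exactly the raw vector moment $\bmu_{2r+2s}$ of a $N_d(\bSigma^{-1}\bx,-\bSigma^{-1})$ vector; then~(\ref{quadmom}) identifies $[(\vec^\top\bI_d)^{\otimes r}\otimes(\vec^\top\mat B)^{\otimes s}]\bmu_{2r+2s}$ with $\nu_{r,s}(\bI_d,\mat B;\bSigma^{-1}\bx,-\bSigma^{-1})$, which finishes the proof. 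The only substantive ingredient is the scaling identity; everything else is bookkeeping with the mixed-product property of Kronecker products. An equivalent route to that identity is to verify by induction that $(\bSigma^{-1})^{\otimes k}\Herm_k(\bx;\bSigma)$ and $\Herm_k(\bSigma^{-1}\bx;\bSigma^{-1})$ satisfy the same recursion~(\ref{recurrence}) with the same initial terms.
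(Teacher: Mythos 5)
Your proof is correct and follows essentially the same route as the paper's: insert the factorisation~(\ref{Drphi}), apply the scaling identity $(\bSigma^{-1})^{\otimes k}\Herm_k(\bx;\bSigma)=\Herm_k(\bSigma^{-1}\bx;\bSigma^{-1})$, and read off the result via~(\ref{muHerm}) and~(\ref{quadmom}). The only difference is that the paper simply cites that scaling identity from Theorem~3.1 of \citet{Hol96a}, whereas you prove it directly from~(\ref{Herm}) using the commutation of $\S_{d,k}$ with $\mat M^{\otimes k}$ --- a correct and slightly more self-contained treatment.
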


The recursive formulation allows for a more efficient optimization algorithm to obtain the minimizer of the
corresponding bandwidth selection criteria, and these minimizers are commonly used as the basis for
data-based optimal bandwidth matrices, whose asymptotic and finite sample properties were studied
in \citet{CD12}.

For large $n$, evaluating the double sum in the previous $V$-statistics can pose two different, in
some sense dual, problems. If we enumerate singly the data difference $\bX_i - \bX_j$, then this
increases the computation time in $n^2$. If we wish to take advantage of vectorized computations
offered in many software packages, then this requires storing an $n^2 \times d$ matrix in memory
which is not always feasible. Thus we have to find the right compromise between execution speed and
memory usage on commonly available desktops computers.

Following Theorem~\ref{thm:eta}, any $V$-statistic of the form $Q_r
(\bSigma)=n^{-2}\sum_{i,j=1}^n\eta_r(\bX_i-\bX_j;\bSigma)$ can be decomposed as a double sum of
products of $\nu_r\big(\bI_d;  \bSigma^{-1} (\bX_i - \bX_j), -\bSigma^{-1}\big)$ with
$\phi_{\bSigma} (\bX_i - \bX_j)$. The two most computationally intensive steps involve the
cumulants
\begin{align*}
\kappa_r\big(\bI_d; \bSigma^{-1} (\bX_i - \bX_j), -\bSigma^{-1}\big)
&= (-2)^{r-1} (r-1)! \big \lbrace -\tr (\bSigma^{-r}) + (\bX_i - \bX_j)^\top \bSigma^{-r-1} (\bX_i - \bX_j)\big \rbrace
\end{align*}
and the normal densities
\begin{align*}
\phi_\bSigma(\bX_i - \bX_j) &= (2\pi)^{-d/2} |\bSigma|^{-1/2} \exp \big\{-\tfrac{1}{2} (\bX_i - \bX_j)^\top \bSigma^{-1} (\bX_i - \bX_j)\big\}.
\end{align*}
The time consuming step in common is the double sum of the terms of the form $(\bX_i - \bX_j)^\top
\bSigma^{-\ell} (\bX_i - \bX_j)$ for some power $\ell\geq1$ of $\bSigma^{-1}$. If we decouple this term into
components
\begin{equation} \label{decouple}
(\bX_i -\bX_j)^\top \bSigma^{-\ell} (\bX_i - \bX_j)
 = \bX_i^\top \bSigma^{-\ell} \bX_i + \bX_j^\top \bSigma^{-\ell} \bX_j - 2 \bX_i^\top \bSigma^{-\ell} \bX_j,
\end{equation}
then each of them are efficiently handled by software in terms of execution but with memory requirements only
slightly larger than storing the original sample $\bX_1, \dots ,\bX_n$, since the differences $\bX_i - \bX_j$, $j=1, \dots, n$, are
kept in memory for each $i$ singly rather than all for $i$ as we loop over $i$.

\section{Numerical comparisons}\label{sec:comparisons}

The implementation of all the algorithms described in this paper are contained in the \texttt{ks} library \citep{ks} in the \texttt{R} statistical programming language \citep{R13}, and in a separate, specific script (Online Resource~1) which is also available from the authors' websites. For each scenario, each algorithm was executed 10 times in {\tt R} 3.0.1  under Ubuntu 12.04 LTS 64 bits, installed on a Dell Precision T6700 with 8 Intel Xeon E5-2609 @ 2.40 GHz CPUs and 32 Gb RAM.
Since the actual execution times are highly dependent on the computing set-up used, it is more useful to focus on relative
execution times to indicate likely performance gains on other computing set-ups.

\subsection{Symmetrizer matrix}

A carefully designed algorithm for the direct implementation was used so that, in fact, only one of the two loops in (\ref{expdef2}) is needed, which moreover selects to loop over $i=1,\dots,d^r$ if $d^r<r!$ (with $r!$ the cardinality of $\mathcal P_r$), and over $\sigma\in\mathcal P_r$ otherwise.
This direct implementation based on Equation~(\ref{expdef2}) was compared to the recursive implementation in Algorithm~\ref{RecSym},
where the ratio of mean direct execution time to the
mean recursive execution time are presented in Table~\ref{tab:Sym} for dimension $d=2,3,4$ and
order $r=2,4,6,8$. Due to memory restrictions, the symmetrizer matrix for $d=4, r=8$ was not
able to be computed.

\begin{table}[!htp]
\centering
\begin{tabular}{llrrrr}
& & $r=2$ & $r=4$ & $r=6$ & $r=8$ \\
$d=2$ 
& direct/recursive   & 0.42 & 0.75 & 9.78 & 386.49 \\
$d=3$ 
& direct/recursive   & 0.20 & 0.71 & 0.66 & 0.52   \\
$d=4$ 
& direct/recursive   & 0.52 & 0.36 & 0.04 &  --
\end{tabular}
\caption{Comparison of mean execution times for direct and recursive implementations
to compute the symmetrizer matrix $\S_{d,r}$, for dimension $d=2,3,4$ and
order $r=2,4,6,8$. Each row is the ratio of the mean direct time to the
mean recursive time.}
\label{tab:Sym}
\end{table}

For $d=2$, the recursive algorithm seems
to be faster from $r=6$ on, and is already more than 300 times faster for $r=8$. Thus,
for low values of $d$ and large $r$, the recursive implementation is preferable.
But as $d$ increases it is harder to notice the advantage of the recursive approach,
since it is noticeable only for large values of $r$. Certainly, it must be pointed out that
the direct computation of the simplified form (\ref{expdef2}) makes it quite competitive for
low values of $d$, which are the most commonly used in practice, since handling these
huge matrices with the current computational power seems inadvisable for $d\geq5$ unless
$r$ is very low.

In fact, using the direct formula (\ref{expdef2}) can also be useful to alleviate the problem of
the storage space needed, because these sparse matrices have a tiny proportion of non-zero
elements, especially for higher values of $d$. Since the symmetrizer matrices are symmetric
\citep{Sch03}, specifying only its lower triangular part (including the diagonal) suffices to
recover the whole matrix. Figure \ref{fig:1} displays the proportion of the $d^r(d^r+1)/2$ entries
in the lower triangular part of $\S_{d,r}$ that are not null. Thus, for instance, only 70 elements
need to be stored to recover $\S_{7,2}$ (which has 2401 entries), and only 9801 elements are needed
to recover $\S_{6,4}$ (which has $1\,679\,616$ entries). It remains as an interesting open problem
to find an explicit formula for the number of non-zero entries of $\S_{d,r}$.

\begin{figure}[t]
\centering

\begin{tikzpicture}
    \node[anchor=south west,inner sep=0] at (0,0) {\includegraphics[width=0.5\textwidth]{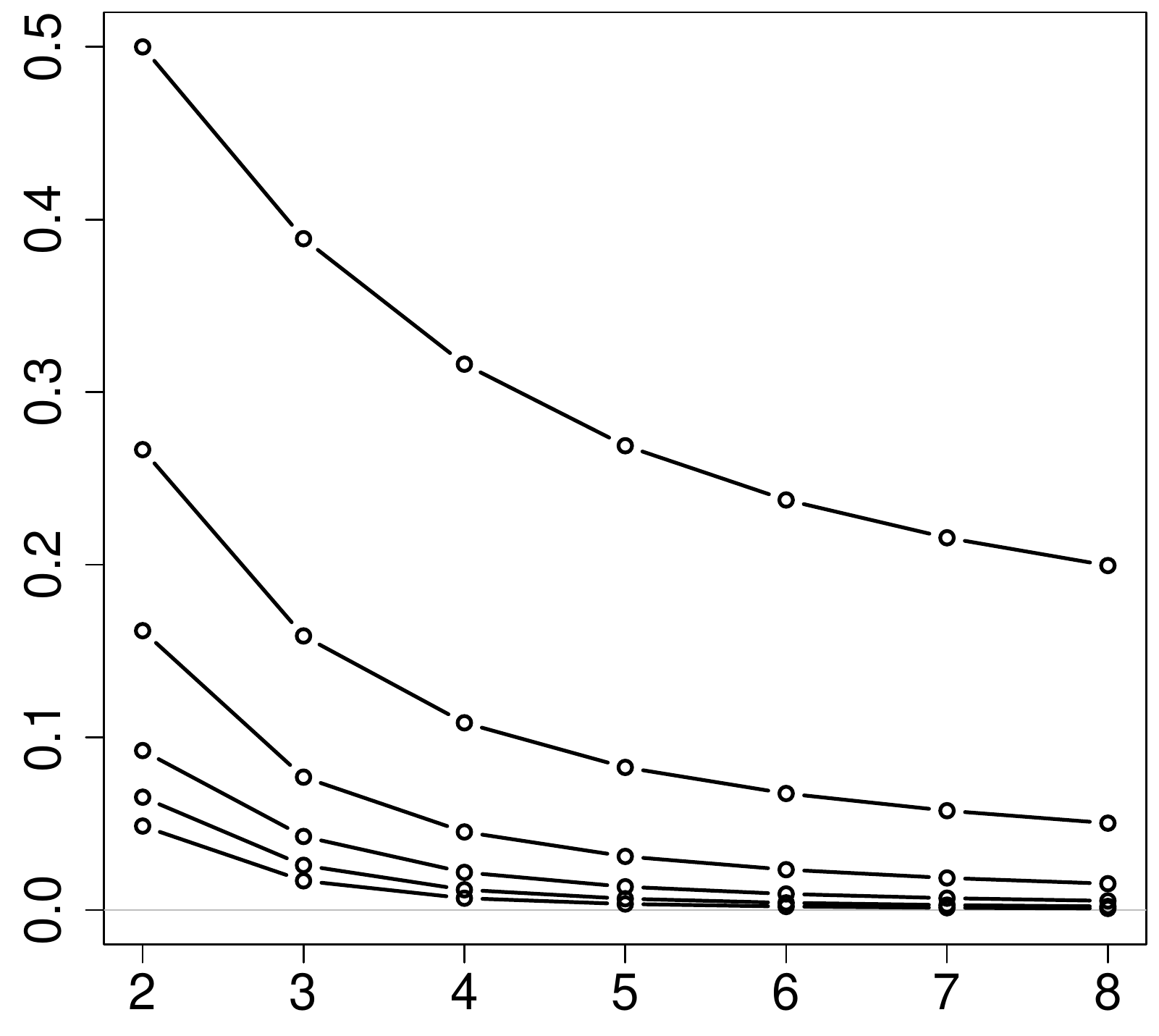}};
    \draw[color=black] (4.1,-.1) node {$r$};
    \draw[color=black] (-0.4,6) node[left,rotate=90] {Prop. of non-zero elements};
\end{tikzpicture}

\caption{Proportion of non-zero elements in the lower triangular part of $\S_{d,r}$ as a function of $r$. From top to bottom the lines correspond to $d=2,\dots,7.$ }
\label{fig:1}
\end{figure}

\subsection{Product of a symmetrizer matrix and a vector}
A similar experiment was conducted to compare the computation
times of the direct approach to compute the product of a symmetrizer matrix
and a vector, which is based on Equation (\ref{Sdrv}), and the recursive
approach presented in Algorithm \ref{RecSdrv}. From Table~\ref{tab:Sdrv}, for this problem, the recursive approach
proved to be faster than the direct one in all the scenarios. Moreover, the reductions in
time achieved by the recursive algorithm can be extremely large for values of $d$ and $r$
commonly encountered in practice. For instance, for $r=8$ the recursive algorithm
produced a result in about 1000--2000 times faster. In the previous section, the
symmetrizer matrix $\S_{4,8}$ was not able to be computed, whereas the product $\S_{4,8} \boldsymbol{v}$,
for $\boldsymbol{v}=(1,2, \dots, 4^8)$ posed no memory problems.

\begin{table}[!ht]
\centering
\begin{tabular}{llrrrr}
& & $r=2$ & $r=4$ & $r=6$ & $r=8$ \\
$d=2$ & direct/recursive & 4.00 & 2.83 & 22.85 & 1878.65 \\
$d=3$ & direct/recursive & 3.50 & 2.00 & 28.03 & 2595.11 \\
$d=4$ & direct/recursive & 2.00 & 2.20 & 21.35 & 1150.85
\end{tabular}
\caption{Comparison of mean execution times for direct and recursive implementations
to compute the product of the symmetrizer matrix $\S_{d,r}$ and a $d^r$-vector, for dimension
$d=2,3,4$ and
order $r=2,4,6,8$. Each row is the ratio of the mean direct time to the
mean recursive time.}
\label{tab:Sdrv}
\end{table}

\subsection{Derivatives of a Gaussian density function}
We compared the performance of computing  the $r$-th derivative of $d$-variate standard Gaussian density
$\D^{\otimes r} \phi(1,\ldots,1)$, for dimension $d=2,3,4$ and order $r=2,4,6,8,10$.
In Table~\ref{tab:deriv}, the upper row in each pair of rows compares
the direct implementation based on Equations~(\ref{Drphi}) and (\ref{Herm}), which nevertheless makes use of Algorithm \ref{RecSdrv} to obtain the multiplication by the symmetrizer matrix,
to the first recursive algorithm based on Equation~(\ref{recurrence}).
The lower row in each pair of rows compares
the direct implementation to the second recursive algorithm based on Algorithm~\ref{unique}
where only the unique elements are computed.
We observed that computing the unique elements of the Hermite vector
polynomial eventually becomes faster, as $r$ increases, than the direct and/or first recursive implementations.

\begin{table}[!htp]
\centering

\begin{tabular}{llrrrrr}
& & $r=2$ & $r=4$ & $r=6$ & $r=8$ & $r=10$\\
$d=2$
& direct/recursive  & 1.18 & 0.68 & 0.66 & 0.59 & 0.66\\
& direct/unique     & 1.93 & 0.76 & 0.83 & 1.12 & 8.00\\
$d=3$
& direct/recursive  & 0.92 & 0.77 & 0.65 & 0.93 & 0.79\\
& direct/unique     & 0.58 & 0.48 & 0.62 & 3.01 & 7.46\\
$d=4$
& direct/recursive  & 1.00 & 0.81 & 0.94 & 0.96 & 0.85\\
& direct/unique     & 0.48 & 0.32 & 0.83 & 3.93 & 7.96
\end{tabular}
\caption{Comparison of mean execution times for direct and recursive implementations
to compute $\D^{\otimes r} \phi(\cdot)$ the $r$-th derivative of $d$-variate standard Gaussian density,
for $d=2,3,4$ and order $r=2,4,6,8,10$. In each pair of rows, the upper row
is the ratio of mean direct time to the
mean time of the first recursive implementation, and the lower row is the ratio of the
mean direct time to the mean time of the second recursive implementation where
only the unique elements are computed.}
\label{tab:deriv}
\end{table}

\subsection{Moments of a Gaussian random variable}

We compared the performance of computing the vector $r$-th moment $\bmu_r$ for a
standard normal Gaussian $N(0, \bI_d)$, for dimension $d=2,3,4$ and order $r=2,4,6,8,10$.
From Table~\ref{tab:mur}, there does not appear to be a clearly more efficient
implementation. In the upper row in each pair of rows, comparing
the direct implementation to the first recursive algorithm based on Equations~(\ref{recurrence})
and (\ref{muHerm}), many of these time ratios are around
one, indicating a more or less equal computational load. In the lower row in each pair of rows,
comparing the direct implementation to the second recursive algorithm where
only the unique elements are computed based on Algorithm~\ref{unique} and Equation~(\ref{muHerm}),
the latter tends to be more efficient than the direct and the first
recursive implementation.

\begin{table}[!ht]
\centering
\begin{tabular}{llrrrrr}
& & $r=2$ & $r=4$ & $r=6$ & $r=8$ & $r=10$\\
$d=2$
& direct/recursive  & 1.79 & 0.99 & 0.80 & 0.82 & 0.96 \\
& direct/unique     & 4.67 & 3.69 & 1.43 & 1.20 & 2.63 \\
$d=3$
& direct/recursive  & 3.23 & 1.37 & 1.10 & 1.25 & 1.13 \\
& direct/unique     & 6.45 & 1.00 & 0.66 & 2.69 & 7.14 \\
$d=4$
& direct/recursive  & 4.31 & 1.05 & 0.76 & 1.06 & 1.02 \\
& direct/unique     & 3.11 & 0.79 & 0.57 & 4.13 & 7.78
\end{tabular}
\caption{Comparison of mean execution times for direct and recursive implementations
to compute  $\bmu_r$ the vector moment of a $d$-variate standard Gaussian random variable,
for $d=2,3,4$ and order $r=2,4,6,8,10$. In each pair of rows, the upper row
is the ratio of mean direct time to the
mean time of the first recursive implementation, and the lower row is the ratio of the
mean direct time to the mean time of the second recursive implementation where
only the unique elements are computed.}
\label{tab:mur}
\end{table}

\subsection{Expected value of quadratic forms in Gaussian random variables}
Recall that the expected value of $(r,s)$-the product of the quadratic form for a
$d$-variate Gaussian random variable
$\bX$ is
$\nu_{r,s} (\mat A, \mat B) =  \E [(\bX^\top \mat A \bX)^r(\bX^\top \mat B \bX)^s]$,
and that $\nu_{r,s}$ involves the $(2r+2s)$-th moments of $\bX$,
so we investigated the performance for dimension
$d=2,3,4$ and $(r,s)$ such that $1 \leq s \leq r, r+s\leq 5$, with
$\mat A = \operatorname{diag}(1,\dots, d), \mat B = \operatorname{diag}(d,\dots, 1)$.
In the upper row in each group of rows of Table~\ref{tab:nurs}, comparing
the direct implementation based on Equation~(\ref{quadmom}) to the first recursive algorithm
based on Equations~(\ref{recurrence}), (\ref{muHerm}--\ref{quadmom}),
most of these time ratios are around
one, indicating a more or less equal computational load.
In the middle row in each group of rows,
comparing the direct implementation to the second recursive algorithm based
on Algorithm~\ref{unique} and Equations~(\ref{muHerm}--\ref{quadmom}), where
only the unique elements of the vector moment are computed,
most of these time ratios are around one for $r+s < 4$, and greater than one for $r+s \geq4$.
In the lower row in each group of rows,
comparing the direct implementation to the third recursive algorithm
based on the moment-cumulant results of Equation~(\ref{Smith})
and Theorem~\ref{jointcumulant},
the computational speed was multiplied by 10- to 1000-fold for many cases, as
$d$ and/or $(r+s)$ increase.  For the $(r,s)$ pairs considered,
this third cumulants-based recursive form is generally the most efficient approach, with more and more substantial speed-ups as the dimension and/or the derivative order increase.

\begin{table}[!ht]
\centering
\begin{tabular}{llrrrrrr}
& & \multicolumn{6}{c}{$(r,s)$} \\
& & $(1,1)$ & $(2,1)$ & $(2,2)$ & $(3,1)$ & $(3,2)$ & $(4,1)$ \\
$d=2$
& direct/recursive  & 1.36 & 1.22 & 1.20 & 0.90 & 1.08  &1.08 \\
& direct/unique     & 0.75 & 0.87 & 1.18 & 0.87 & 2.44 & 2.37 \\
& direct/cumulant   & 1.07 & 1.47 & 1.64 & 1.83 & 3.51 & 5.33 \\
$d=3$
& direct/recursive  & 0.94 & 1.06 &  1.25 &  1.25 &   1.14 &   1.14 \\
& direct/unique     & 0.40 & 0.56 &  2.86 &  2.84 &   7.76 &   7.74 \\
& direct/cumulant   & 1.33 & 2.70 & 18.21 & 24.68 & 228.98 & 336.00 \\
$d=4$
& direct/recursive  & 1.11 &  1.05 &   1.35 &   1.10 &    0.89&     0.88 \\
& direct/unique     & 0.27 &  0.74 &   4.93 &   4.63 &    7.17&     7.15 \\
& direct/cumulant   & 1.33 & 10.20 & 290.50 & 334.77 & 3797.22 & 4823.30
\end{tabular}
\caption{Comparison of mean execution times for direct and recursive implementations
to compute $\nu_{r,s}$ the expected value of $(r,s)$-th product of the quadratic form
of a $d$-variate Gaussian random variable,
for $d=2,3,4$ and orders $(r,s), 1 \leq s \leq r, r+s\leq 5$.
In each group of rows, the upper row
is the ratio of mean direct time to the
mean time of the first recursive implementation, the middle row is the ratio of
the mean direct time to the mean time of the second recursive implementation where
only the unique elements of the vector moment are computed, and the last row is the ratio of
the mean direct time to the mean time of the third recursive implementation based on
moment-cumulants.}
\label{tab:nurs}
\end{table}

\subsection{Gaussian kernel based $V$-statistics}
Samples of size $n=100,1000,10000$ were drawn from the $d$-variate standard
Gaussian distribution $N(\mat 0, \bI_d)$, for $d=2,3,4$, and from these samples
the $V$-statistic $Q_{r} (\bI_d)
= n^{-2} \sum_{i,j=1}^n \eta_{r} (\bX_i - \bX_j; \bI_d)$ was computed for $r=0,2,4$.
The direct implementation is based on Equations~(\ref{Drphi}) and (\ref{Herm})
and the recursive cumlants-based algorithm combining Theorem~{\ref{thm:eta} and
Equation~(\ref{decouple}). As expected, Table \ref{tab:Qr} shows that the time
savings increase with increasing dimension and increasing derivative order, with
10- to 1000-fold improvements in most cases.

\begin{table}[t]
\centering
\begin{tabular}{llrrrrrrrrr}
& & \multicolumn{3}{c}{$n=100$} & \multicolumn{3}{c}{$n=1000$} & \multicolumn{3}{c}{$n=10000$} \\
& & $r=0$ & $r=2$ & $r=4$ & $r=0$ & $r=2$ & $r=4$ & $r=0$ & $r=2$ & $r=4$  \\
$d=2$ & direct/cumulant & 0.55 & 2.93 &   8.48 & 6.86 & 23.97 &  44.33 & 4.85  & 17.47 &  59.33\\
$d=3$ & direct/cumulant & 0.99 & 4.04 & 118.05 & 7.87 & 42.51 & 163.42 & 7.07  & 32.17 & 258.38\\
$d=4$ & direct/cumulant & 1.57 & 7.06 & 347.71 & 9.05 & 64.52 & 548.14 & 6.36  & 46.83 & 1010.99
\end{tabular}
\caption{Comparison of mean execution times for direct and recursive implementations
to compute $Q_r$ the Gaussian kernel based $V$-statistic, for dimension
$d=2,3,4$, derivative order $r=0,2,4$, and sample size $n=100,1000,10000$. Each row is the ratio of the mean direct time to the
mean recursive time based on moment-cumulants.}
\label{tab:Qr}
\end{table}

\bigskip

\noindent {\bf Acknowledgements.} We thank two anonymous referees for a careful reading of the paper. This work has been partially supported by grants MTM2010-16660 (both authors) and MTM2010-17366 (first author) from the Spanish Ministerio de Ciencia e Innovaci\'on. The second author also received funding from the program ``Investissements d’avenir'' ANR-10-IAIHU-06.

\appendix

\section{Appendix: Proofs}\label{proofs}

\subsection{Proofs of the results in Section \ref{sec:symm}}

\allowdisplaybreaks

The key elements to prove Theorem \ref{thm:1} are the following two lemmas.

\begin{lemma}\label{perm}
For every $j\in\mathbb N_{r+1}:=\{1,2,\dots,r+1\}$ denote by $\tau_j\in\mathcal P_{r+1}$ the
permutation defined by $\tau_j(j)=r+1$, $\tau_j(r+1)=j$ and $\tau_j(i)=i$ for $j\neq i\neq r+1$.
Then we can express
$$\mathcal P_{r+1}=\big\{\sigma\circ\tau_j\colon \sigma\in\mathcal P_r,\
j\in\mathbb N_{r+1}\big\}.$$
\end{lemma}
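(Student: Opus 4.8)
\emph{Plan of proof.} The whole point is to read this as a coset decomposition of the symmetric group. Inside $\mathcal P_{r+1}$, I would identify $\mathcal P_r$ with the stabilizer of the point $r+1$, i.e.\ with the subgroup $\{\sigma\in\mathcal P_{r+1}\colon\sigma(r+1)=r+1\}$, and observe that each $\tau_j$ is just the transposition that swaps $j$ and $r+1$ (with $\tau_{r+1}=\mathrm{id}$), so in particular $\tau_j\circ\tau_j=\mathrm{id}$. Under this reading the asserted identity is exactly the decomposition of $\mathcal P_{r+1}$ into the $r+1$ left cosets of the point stabilizer $\mathcal P_r$, with the family $\{\tau_j\}_{j\in\mathbb N_{r+1}}$ serving as a transversal.

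First I would dispose of the trivial inclusion: since $\mathcal P_{r+1}$ is a group containing both every $\sigma\in\mathcal P_r$ and every $\tau_j$, each composition $\sigma\circ\tau_j$ lies in $\mathcal P_{r+1}$, so $\{\sigma\circ\tau_j\colon\sigma\in\mathcal P_r,\ j\in\mathbb N_{r+1}\}\subseteq\mathcal P_{r+1}$.

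For the reverse inclusion I would take an arbitrary $\pi\in\mathcal P_{r+1}$ and set $j:=\pi^{-1}(r+1)\in\mathbb N_{r+1}$ together with $\sigma:=\pi\circ\tau_j$. Then $\sigma(r+1)=\pi(\tau_j(r+1))=\pi(j)=\pi(\pi^{-1}(r+1))=r+1$, so $\sigma$ fixes $r+1$ and hence $\sigma\in\mathcal P_r$; and because $\tau_j\circ\tau_j=\mathrm{id}$ we recover $\sigma\circ\tau_j=\pi\circ\tau_j\circ\tau_j=\pi$. This writes $\pi$ in the desired form and finishes the argument.

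I would close with a remark that this representation is unique: if $\pi=\sigma\circ\tau_j$ with $\sigma\in\mathcal P_r$, then evaluating at $j$ gives $\pi(j)=\sigma(\tau_j(j))=\sigma(r+1)=r+1$, forcing $j=\pi^{-1}(r+1)$ and then $\sigma=\pi\circ\tau_j$; hence the union on the right is disjoint and has precisely $(r+1)\cdot r!=(r+1)!=|\mathcal P_{r+1}|$ elements, which yields an alternative counting proof of the equality and, more importantly, makes the formula safe to use as a reindexing device in the sum defining $\S_{d,r+1}$. There is no serious obstacle in this proof; the only things to watch are the identification of $\mathcal P_r$ with the stabilizer of $r+1$ and the order of composition, since what is being used is the \emph{left} coset decomposition $\mathcal P_{r+1}=\bigsqcup_{j\in\mathbb N_{r+1}}\mathcal P_r\,\tau_j$.
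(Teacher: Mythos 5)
Your proof is correct and follows essentially the same route as the paper: both arguments exhibit the explicit inverse of the map $(\sigma,j)\mapsto\sigma\circ\tau_j$ by setting $j=\pi^{-1}(r+1)$ and $\sigma=\pi\circ\tau_j$ (the paper writes this $\sigma$ out coordinate-wise, which amounts to the same thing), and your added uniqueness/counting remark is a harmless bonus. One cosmetic quibble: the sets $\mathcal P_r\,\tau_j$ are \emph{right} cosets of the stabilizer under the usual convention, not left cosets.
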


\begin{proof}
As any $\sigma\in\mathcal P_r$ can be thought as an element of $\mathcal P_{r+1}$ by defining
$\sigma(r+1)=r+1$, consider the map $\varphi\colon\mathcal P_r\times\mathbb N_{r+1}\to\mathcal
P_{r+1}$ given by $\varphi(\sigma,j)=\sigma\circ\tau_j$. We conclude by noting that this map is
bijective, with inverse given by $\varphi^{-1}(\tilde \sigma)=(\sigma,j)$, where
$j=\tilde\sigma^{-1}(r+1)$ is such that $\tilde\sigma(j)=r+1$ and, for $i\in\mathbb N_r$,
$\sigma(i)=\tilde\sigma(i)$ if $\tilde\sigma(i)\neq r+1$ and $\sigma(i)=\tilde\sigma(r+1)$ if
$\tilde\sigma(i)=r+1$.
\end{proof}

\begin{lemma}\label{trans}
If $\mat A\in\mathcal M_{m\times n}$, $\mat B\in\mathcal M_{p\times q}$ and
$\gvec{a},\gvec{b}\in\mathbb R^d$, then
$$\mat A\otimes\gvec{a}^\top\otimes\mat B\otimes\gvec{b}^\top=(\mat A\otimes\gvec{b}^\top\otimes\mat B\otimes\gvec{a}^\top)\cdot(\bI_{dn}\otimes\mat K_{q,d})
\cdot(\bI_n\otimes\mat K_{d,dq}).$$
\end{lemma}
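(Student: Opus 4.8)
The plan is to derive the identity by repeated, purely mechanical application of the commutation-matrix relation recalled just above, namely $\mat K_{p,m}(\mat A\otimes\mat B)\mat K_{n,q}=\mat B\otimes\mat A$ for $\mat A\in\mathcal M_{m\times n}$, $\mat B\in\mathcal M_{p\times q}$, together with the mixed Kronecker product rule $(\mat P\otimes\mat R)(\mat S\otimes\mat T)=(\mat P\mat S)\otimes(\mat R\mat T)$ and the facts $\mat K_{r,s}^{-1}=\mat K_{s,r}$ and $\mat K_{1,k}=\mat K_{k,1}=\bI_k$. The point of the last identity is that $\gvec a^\top$ and $\gvec b^\top$ are $1\times d$ matrices, so whenever a commutation matrix is associated with their single row it collapses to an identity and disappears, which is what keeps the computation short.

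First I would treat the three-factor block $\gvec a^\top\otimes\mat B\otimes\gvec b^\top$. Applying the commutation relation to $(\gvec a^\top\otimes\mat B)\otimes\gvec b^\top$ (here the left commutation matrix is $\mat K_{1,p}=\bI_p$) gives $(\gvec a^\top\otimes\mat B\otimes\gvec b^\top)\mat K_{dq,d}=\gvec b^\top\otimes\gvec a^\top\otimes\mat B$. Next I would reinsert $\mat B$ between the two row vectors: since $\mat K_{p,1}=\bI_p$, the commutation relation applied to $\gvec a^\top\otimes\mat B$ yields $\gvec a^\top\otimes\mat B=(\mat B\otimes\gvec a^\top)\mat K_{q,d}$, and pulling this trailing $\mat K_{q,d}$ out through the leading $\gvec b^\top$ via the mixed product rule ($\gvec b^\top$ has $d$ columns) gives $\gvec b^\top\otimes\gvec a^\top\otimes\mat B=(\gvec b^\top\otimes\mat B\otimes\gvec a^\top)(\bI_d\otimes\mat K_{q,d})$. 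Combining the two displays and multiplying on the right by $\mat K_{dq,d}^{-1}=\mat K_{d,dq}$ produces $\gvec a^\top\otimes\mat B\otimes\gvec b^\top=(\gvec b^\top\otimes\mat B\otimes\gvec a^\top)(\bI_d\otimes\mat K_{q,d})\mat K_{d,dq}$.

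Finally I would tensor this intermediate identity on the left with $\mat A\in\mathcal M_{m\times n}$ and move $\mat A$ through the product of matrices using $\mat A\otimes(\mat M\mat N)=(\mat A\otimes\mat M)(\bI_n\otimes\mat N)$ and $\bI_n\otimes(\mat P\mat Q)=(\bI_n\otimes\mat P)(\bI_n\otimes\mat Q)$, then simplify $\bI_n\otimes\bI_d\otimes\mat K_{q,d}=\bI_{dn}\otimes\mat K_{q,d}$. The result is exactly $\mat A\otimes\gvec a^\top\otimes\mat B\otimes\gvec b^\top=(\mat A\otimes\gvec b^\top\otimes\mat B\otimes\gvec a^\top)(\bI_{dn}\otimes\mat K_{q,d})(\bI_n\otimes\mat K_{d,dq})$, which is the claim.

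The computation itself is elementary once this route is fixed; the only real care needed is bookkeeping of block sizes, so that each commutation matrix carries the correct subscripts, each identity block has the right order, and the ``invisible'' factors really are of the form $\mat K_{1,\cdot}$ or $\mat K_{\cdot,1}$ at every step — this dimension-tracking is the part most prone to slips. As a sanity check I would also note an alternative entrywise argument: because $\gvec a^\top$ and $\gvec b^\top$ contribute a single row, both sides have rows indexed by the same pair $(i,j)$, $1\le i\le m$, $1\le j\le p$, so the identity is equivalent to the statement that $(\bI_{dn}\otimes\mat K_{q,d})(\bI_n\otimes\mat K_{d,dq})$ is precisely the permutation of the $ndqd$ columns interchanging the two length-$d$ blocks arising from $\gvec a$ and $\gvec b$, which can be read off directly from the action of commutation matrices on Kronecker products of standard basis vectors. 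I would present the algebraic chain above as the proof and keep the permutation picture as a verification.
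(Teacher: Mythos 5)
Your proof is correct and follows essentially the same route as the paper's (two-line) proof: first swap $\gvec a^\top\otimes\mat B$ with $\gvec b^\top$, then swap $\gvec a^\top$ with $\mat B$, each swap contributing one of the two commutation-matrix factors. Deferring the tensoring with $\mat A$ to the end rather than carrying it along is only a bookkeeping difference; the dimension checks in your chain all work out.
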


\begin{proof}
Use the properties of the commutation matrix to first permute $\gvec{a}^\top\otimes\mat B$ with
$\gvec{b}^\top$, keeping $\mat A$ in the same place, and then to permute $\gvec{a}^\top$ with $\mat B$
keeping $\mat A\otimes\gvec{b}^\top$ in the same place.
\end{proof}

The previous lemmas are helpful to manipulate the original definition of $\S_{d,r}$ and thus obtain the proof of Theorem \ref{thm:1}.

\begin{proof}[Proof of Theorem \ref{thm:1}]
Note that for any two vectors $\boldsymbol v,\boldsymbol w\in\mathbb R^d$ we have $\boldsymbol
v\boldsymbol w^\top=\boldsymbol v\otimes\boldsymbol w^\top$. Then, with the identification $\mathcal
P_r\subset\mathcal P_{r+1}$ and the notation $\tau_j$ as in Lemma \ref{perm}, for any
$\sigma\in\mathcal P_r$ and $j\in\mathbb N_{r+1}$,
\begin{align}
\bigotimes_{\ell=1}^{r+1}\be_{i_\ell}\be_{i_{\sigma(\tau_j(\ell))}}^\top&=
\bigotimes_{\ell=1}^{j-1}\be_{i_\ell}\be_{i_{\sigma(\ell)}}^\top\otimes
\be_{i_j}\be_{i_{r+1}}^\top\otimes
\bigotimes_{\ell=j+1}^{r}\be_{i_\ell}\be_{i_{\sigma(\ell)}}^\top\otimes
\be_{i_{r+1}}\be_{i_{\sigma(j)}}^\top\nonumber\\
&=\Big\{\bigotimes_{\ell=1}^r\be_{i_\ell}\be_{i_{\sigma(\ell)}}^\top\otimes
\be_{i_{r+1}}\be_{i_{r+1}}^\top\Big\}\cdot(\bI_{d^j}\otimes\mat
K_{d^{r-j},d})(\bI_{d^{j-1}}\otimes\mat K_{d,d^{r-j+1}})\label{eq1}
\end{align}
where for the second equality we have applied Lemma \ref{trans} with $\boldsymbol a=\be_{i_{r+1}}$,
$\boldsymbol b=\be_{i_{\sigma(j)}}$,
$$\mat
A=\bigotimes_{\ell=1}^{j-1}\be_{i_\ell}\be_{i_{\sigma(\ell)}}^\top\otimes
\be_{i_j}\in\mathcal M_{d^j,d^{j-1}}\quad\text{and}\quad\mat
B=\bigotimes_{\ell=j+1}^{r}\be_{i_\ell}\be_{i_{\sigma(\ell)}}^\top\otimes
\be_{i_{r+1}}\in\mathcal M_{d^{r-j+1},d^{r-j}}.$$ Taking Lemma
\ref{perm}, (\ref{eq1}) and the definition of $\mat T_{d,r+1}$ into account,
\begin{align*}
\S_{d,r+1}&=\frac{1}{(r+1)!}\sum_{i_1,i_2,\dots,i_{r+1}=1}^d\; \sum_{\sigma\in\mathcal
P_{r+1}}\bigotimes_{\ell=1}^{r+1}\be_{i_\ell}\be_{i_{\sigma(\ell)}}^\top\\
&=\frac{1}{(r+1)!}\sum_{i_1,i_2,\dots,i_{r+1}=1}^d\;
\sum_{\sigma\in\mathcal
P_r}\sum_{j=1}^{r+1}\bigotimes_{\ell=1}^{r+1}\be_{i_\ell}\be_{i_{\sigma(\tau_j(\ell))}}^\top\\
&=\frac{1}{r!}\sum_{i_1,i_2,\dots,i_{r+1}=1}^d\;
\sum_{\sigma\in\mathcal
P_r}\Big\{\bigotimes_{\ell=1}^r\be_{i_\ell}\be_{i_{\sigma(\ell)}}^\top\otimes
\be_{i_{r+1}}\be_{i_{r+1}}^\top\Big\}\mat T_{d,r+1}\\
&=\Big\{\S_{d,r}\otimes
\Big(\textstyle\sum_{i_{r+1}=1}^d\be_{i_{r+1}}\be_{i_{r+1}}^\top\Big)\Big\}\mat
T_{d,r+1}\\
&=(\S_{d,r}\otimes\bI_d) \mat T_{d,r+1},
\end{align*}
as  $\bI_d=\sum_{i=1}^d\be_i\be_i^\top$.
\end{proof}

To obtain a recursive formula for the matrix $\mat T_{d,r}$  we first need to write the matrices
$\mat K_{d^{p+1},d}$ and $\mat K_{d,d^{p+1}}$ depending on $\mat K_{d^p,d}$ and $\mat K_{d,d^p}$,
respectively.

\begin{lemma}\label{Kmat}
For any $p\geq0$
\begin{align*}
\mat K_{d^{p+1},d}&=(\bI_{d^p}\otimes\mat K_{d,d})(\mat
K_{d^p,d}\otimes\bI_d)=(\bI_d\otimes\mat K_{d^p,d})(\mat
K_{d,d}\otimes\bI_{d^p})\\
\mat K_{d,d^{p+1}}&=(\mat
K_{d,d^p}\otimes\bI_d)(\bI_{d^p}\otimes\mat K_{d,d})=(\mat
K_{d,d}\otimes\bI_{d^p})(\bI_d\otimes\mat K_{d,d^p}).
\end{align*}
\end{lemma}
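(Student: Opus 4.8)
The plan is to prove Lemma~\ref{Kmat} by directly applying the two basic structural identities for commutation matrices, namely the factorization $\mat K_{mp,n}=(\mat K_{m,n}\otimes\bI_p)(\bI_m\otimes\mat K_{p,n})$ (and its transpose/symmetric companion $\mat K_{m,np}=(\bI_n\otimes\mat K_{m,p})(\mat K_{m,n}\otimes\bI_p)$), which are standard and go back to \citet{MN79}. First I would specialize the first identity with $m=d$, $n=d$ and $p=d^p$ to write $\mat K_{d\cdot d^p,d}=\mat K_{d^{p+1},d}=(\mat K_{d,d}\otimes\bI_{d^p})(\bI_d\otimes\mat K_{d^p,d})$, which is the second of the two claimed expressions for $\mat K_{d^{p+1},d}$; then specializing instead with $m=d^p$, $n=d$, $p=d$ gives $\mat K_{d^p\cdot d,d}=(\mat K_{d^p,d}\otimes\bI_d)(\bI_{d^p}\otimes\mat K_{d,d})$, and this must equal the first claimed expression $(\bI_{d^p}\otimes\mat K_{d,d})(\mat K_{d^p,d}\otimes\bI_d)$ — so one has to check the two factors commute, or equivalently rederive the first form using the alternative bracketing of the triple Kronecker decomposition.

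The cleanest route to the remaining equalities is to observe that $\mat K_{m,n}^\top=\mat K_{n,m}$ and $(\mat A\otimes\mat B)^\top=\mat A^\top\otimes\mat B^\top$. Thus, transposing the two established identities for $\mat K_{d^{p+1},d}$ immediately yields the two identities for $\mat K_{d,d^{p+1}}=\mat K_{d^{p+1},d}^\top$, since $(\bI_{d^j})^\top=\bI_{d^j}$ and $\mat K_{d,d}^\top=\mat K_{d,d}$, with the order of the factors reversed by the transpose. Concretely, transposing $\mat K_{d^{p+1},d}=(\bI_{d^p}\otimes\mat K_{d,d})(\mat K_{d^p,d}\otimes\bI_d)$ gives $\mat K_{d,d^{p+1}}=(\mat K_{d,d^p}\otimes\bI_d)(\bI_{d^p}\otimes\mat K_{d,d})$, which is the first line of the second display; transposing the other form gives the second expression. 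So really only the two statements about $\mat K_{d^{p+1},d}$ require genuine work, and the row for $\mat K_{d,d^{p+1}}$ follows for free.

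The only mild obstacle is bookkeeping: making sure the index sizes in the generic commutation-matrix factorization are substituted consistently, and confirming that the two decompositions of $\mat K_{d^{p+1},d}$ agree (they must, since they both equal $\mat K_{d^{p+1},d}$, but it is worth noting that this provides the nontrivial commutation relation $(\bI_{d^p}\otimes\mat K_{d,d})(\mat K_{d^p,d}\otimes\bI_d)=(\mat K_{d^p,d}\otimes\bI_d)(\bI_{d^p}\otimes\mat K_{d,d})$ used implicitly in the first line). An alternative, equally short argument avoids invoking the named factorization identity altogether: interpret each $\mat K$ as the linear map permuting coordinates of a Kronecker product $\bx_1\otimes\cdots\otimes\bx_{p+2}$ with each $\bx_i\in\mathbb R^d$, note that $\mat K_{d^{p+1},d}$ realizes the cyclic shift moving the last block to the front, and verify that composing the adjacent-block swap $\mat K_{d,d}$ (tensored with suitable identities) with the shorter cyclic shift $\mat K_{d^p,d}$ produces exactly that permutation; either bracketing of the composition corresponds to performing the swap first on one side or the other. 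I would present the proof via the factorization identity of \citet{MN79} as the primary argument, since it is the most self-contained, and deduce the $\mat K_{d,d^{p+1}}$ identities by transposition.
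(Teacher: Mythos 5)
Your overall strategy (factorize the commutation matrix of a product index into two block swaps, then obtain the second line of the lemma by transposition) is viable and close in spirit to the paper's proof, which instead starts from the explicit representation $\mat K_{m,n}=\sum_{j=1}^n \be_j^\top\otimes\bI_m\otimes\be_j$ of Theorem~3.1$(i)$ in \citet{MN79}, manipulates the Kronecker product directly, and likewise obtains the $\mat K_{d,d^{p+1}}$ identities by transposition. Your transposition step is correct.

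There is, however, a genuine error in your main step. Under the paper's convention, where $\mat K_{p,m}(\bx\otimes\by)=\by\otimes\bx$ for $\bx\in\mathbb R^m$ and $\by\in\mathbb R^p$, the correct decomposition is $\mat K_{mn,p}=(\bI_m\otimes\mat K_{n,p})(\mat K_{m,p}\otimes\bI_n)$: the rightmost factor first moves the $\mathbb R^p$-block past the $\mathbb R^m$-block, and the left factor then moves it past the $\mathbb R^n$-block. You quote this identity with the two factors in the reversed order, and the reversed product is \emph{not} equal to it. For instance, with $d=2$ and $p=1$, the matrix $(\bI_2\otimes\mat K_{2,2})(\mat K_{2,2}\otimes\bI_2)$ sends $\bu_1\otimes\bu_2\otimes\bu_3$ to $\bu_2\otimes\bu_3\otimes\bu_1$ and equals $\mat K_{4,2}$, whereas $(\mat K_{2,2}\otimes\bI_2)(\bI_2\otimes\mat K_{2,2})$ sends it to $\bu_3\otimes\bu_1\otimes\bu_2$ and equals $\mat K_{2,4}$. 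Consequently the ``nontrivial commutation relation'' $(\bI_{d^p}\otimes\mat K_{d,d})(\mat K_{d^p,d}\otimes\bI_d)=(\mat K_{d^p,d}\otimes\bI_d)(\bI_{d^p}\otimes\mat K_{d,d})$ that you invoke to reconcile your expressions with the lemma is false: the two sides are transposes of one another, equal respectively to $\mat K_{d^{p+1},d}$ and $\mat K_{d,d^{p+1}}$, and your premise that ``they both equal $\mat K_{d^{p+1},d}$'' rests on the incorrectly ordered identity. The repair is simple: use the decomposition in the correct order, after which the two claimed expressions for $\mat K_{d^{p+1},d}$ drop out of the two factorizations $d^{p+1}=d^p\cdot d$ and $d^{p+1}=d\cdot d^p$ with no commutation needed. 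Your alternative sketch via permutations of tensor factors is sound and, carried out carefully, yields exactly this.
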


\begin{proof}
Using part $i)$ of Theorem 3.1 in \cite{MN79}, we can write
\begin{align*}
\mat
K_{d^{p+1},d}&=\sum_{j=1}^d(\be_j^\top\otimes\bI_{d^{p+1}}\otimes\be_j)=\sum_{j=1}^d(\be_j^\top\otimes\bI_{d^p}\otimes\bI_d\otimes\be_j)\\
&=(\bI_{d^p}\otimes\mat
K_{d,d})\sum_{j=1}^d(\be_j^\top\otimes\bI_{d^p}\otimes\be_j\otimes\bI_d)=(\bI_{d^p}\otimes\mat
K_{d,d})(\mat K_{d^p,d}\otimes\bI_d).
\end{align*}
The second equality for $\mat K_{d^{p+1},d}$ follows similarly and the formulas for $\mat
K_{d,d^{p+1}}$ can be derived from the previous ones by noting that $\mat K_{d,d^{p+1}}=\mat
K_{d^{p+1},d}^\top$.
\end{proof}

Using the previous lemma we obtain a straightforward proof of Theorem \ref{thm:2}.

\begin{proof}[Proof of Theorem \ref{thm:2}]
Using Lemma \ref{Kmat} for the first $r-1$ terms in the definition of $\mat T_{d,r+1}$, and the property that $(\mat A\mat C)\otimes(\mat B\mat D)=(\mat A\otimes\mat B)(\mat C\otimes\mat D)$, it follows that
\begin{align*}
(r+1)\mat T_{d,r+1}&=\sum_{j=1}^{r-1}(\bI_{d^j}\otimes\mat
K_{d^{r-j},d})(\bI_{d^{j-1}}\otimes\mat K_{d,d^{r-j+1}})+(\bI_{d^{r-1}}\otimes\mat K_{d,d})+\bI_{d^{r+1}}\\
&=\sum_{j=1}^{r-1}\big[\bI_{d^j}\otimes\{(\bI_{d^{r-j-1}}\otimes\mat K_{d,d})(\mat
K_{d^{r-j-1},d}\otimes\bI_d)\}\big]\\&\qquad\times\big[\bI_{d^{j-1}}\otimes\{(\mat K_{d,d^{r-j}}\otimes\bI_d)(\bI_{d^{r-j}}\otimes\mat K_{d,d})\}\big]+(\bI_{d^{r-1}}\otimes\mat K_{d,d})+\bI_{d^{r+1}}\\
&=(\bI_{d^{r-1}}\otimes\mat K_{d,d})\Big[\Big\{\sum_{j=1}^{r}(\bI_{d^j}\otimes\mat
K_{d^{r-j-1},d})(\bI_{d^{j-1}}\otimes\mat K_{d,d^{r-j}})\Big\}\otimes\bI_d\Big]\\&\qquad\times(\bI_{d^{r-1}}\otimes\mat K_{d,d})+(\bI_{d^{r-1}}\otimes\mat K_{d,d})\\
&=(\bI_{d^{r-1}}\otimes\mat K_{d,d})(r\mat T_{d,r}\otimes\bI_d)(\bI_{d^{r-1}}\otimes\mat K_{d,d})+(\bI_{d^{r-1}}\otimes\mat K_{d,d}),
\end{align*}
where the third equality makes use of $\bI_{d^p}\otimes\bI_{d^q}=\bI_{d^{p+q}}$.
\end{proof}

\subsection{Proofs of the results in Section \ref{sec:4}}

As noted in the text, the proof of Corollary \ref{cor:1} follows by induction on $r$.

\begin{proof}[Proof of Corollary \ref{cor:1}]
For $r=1$ the formula immediately follows, since $\S_{d,1}=\bI_d=\mat T_{d,1}$. The induction step is easily deduced by using formula $\S_{d,r+1}=(\S_{d,r}\otimes\bI_d)\mat T_{d,r+1}$ from Theorem \ref{thm:1} using the same tools as before, taking into account that $\bI_{d^p}\otimes\bI_d=\bI_{d^{p+1}}$ and that $(\mat A\mat C)\otimes(\mat B\mat D)=(\mat A\otimes\mat B)(\mat C\otimes\mat D)$.
\end{proof}

Corollary \ref{cor:2} is deduced from Corollary \ref{cor:1} as follows.

\begin{proof}[Proof of Corollary \ref{cor:2}]
Clearly, the Kronecker product $\bigotimes_{\ell=1}^r\be_{i_\ell}$ of $r$ vectors $\be_{i_1},\dots,\be_{i_r}$ of the canonical base of $\mathbb R^d$ gives the $p(i_1,\dots,i_r)$-th vector of the canonical base in $\mathbb R^{d^r}$ (i.e., the $p(i_1,\dots,i_r)$-th column of $\bI_{d^r}$). Therefore, any vector $\boldsymbol v=(v_1,\dots,v_{d^r})\in\mathbb R^{d^r}$ can be written as $\boldsymbol v=\sum_{i=1}^{d^r}v_i\bigotimes_{\ell=1}^r\be_{(p^{-1}(i))_\ell}$ and so, by linearity, it suffices to obtain a simple formula for expressions of the type $(\mat T_{d,k}\otimes\bI_{d^{r-k}})(\bigotimes_{\ell=1}^r\be_{i_\ell})$. Further, since $(\mat T_{d,k}\otimes\bI_{d^{r-k}})(\bigotimes_{\ell=1}^r\be_{i_\ell})=\big\{\mat T_{d,k}\big(\bigotimes_{\ell=1}^k\be_{i_\ell}\big)\big\}\otimes\bigotimes_{\ell=k+1}^r\be_{i_\ell}$, it follows that it is enough to provide a simple interpretation for the multiplications $\mat T_{d,k}\big(\bigotimes_{\ell=1}^k\be_{i_\ell}\big)$ for $k=2,\dots,r$.

Finally, using the properties of the commutation matrix \citep{MN79}, it can be checked that
\begin{equation}\label{Tdke}
\mat T_{d,k}\bigg(\bigotimes_{\ell=1}^k\be_{i_\ell}\bigg)=\frac1k\sum_{j=1}^k\bigg\{\bigotimes_{\ell=1}^{j-1}\be_{i_\ell}\otimes\be_{i_k}\otimes
\bigotimes_{\ell=j+1}^{k-1}\be_{i_\ell}\otimes\be_{i_j}\bigg\}
\end{equation}
with the convention that $\bigotimes_{\ell=j}^k\be_{i_\ell}=1$ if $j>k$. In words, $k\mat T_{d,k}\big(\bigotimes_{\ell=1}^k\be_{i_\ell}\big)$ consists of adding up all the possible $k$-fold Kronecker products in which the last factor is interchanged with the $j$-th factor, for $j=1,2,\dots,k$.
\end{proof}

\subsection{Proofs of the results in Section \ref{sec:app}}\label{sec:proof-comp}

First, let us point out why the formula for the joint cumulant in Corollary 3.3.1 of \cite{MP92} is not always correct. Using the notation of Theorem \ref{jointcumulant} above, their formula reads as follows: for $r\geq1,\,s\geq1$,
\begin{align}
\kappa_{r,s}(\mat A, \mat B) &= 2^{r+s-1} (r+s-1)!\tr \big(\mat F_1^r\mat F_2^s\big)\nonumber\\
&\quad+2^{r+s-1}(r+s-2)!\big\{ r(r-1) \tr\big(\mat F_1^{r-1}\mat F_2^s\mat F_1\bSigma^{-1}\bmu\bmu^\top \big)\label{MP}\\
&\quad +s(s-1) \tr\big(\mat F_2^{s-1}\mat F_1^r\mat F_2\bSigma^{-1}\bmu\bmu^\top \big)+2rs \tr\big(\mat F_1^r\mat F_2^s\bSigma^{-1}\bmu\bmu^\top \big) \big\}.\nonumber
\end{align}
To further simplify our comparison, consider for example the case $\bmu=0$, and $r=s=2$, so that (\ref{MP}) simply reads $ 2^3\, 6\tr \big(\mat F_1^2\mat F_2^2\big)$. Writing down explicitly the six elements in $\mathcal{MP}_{2,2}$ and applying the cyclic property of the trace, the correct form from Theorem \ref{jointcumulant} has
\begin{align*}
2^3\, 2!2!\sum_{\boldsymbol i\in\mathcal{MP}_{2,2}}\tr\big(\mat F_{i_1}\mat F_{i_2}\mat F_{i_3}\mat F_{i_4}\big)/4
=2^3\ \big\{4\tr\big(\mat F_1^2\mat F_2^2\big)+2\tr\big(\mat F_1\mat F_2\mat F_1\mat F_2\big)\big\}
\end{align*}
instead. Both formulas involve $6$ traces of matrices, all having two factors $\mat F_1$ and another two factors $\mat F_2$. However, despite the aforementioned cyclic property of the trace, it is not true in general that $\tr\big(\mat F_1\mat F_2\mat F_1\mat F_2\big)=\tr\big(\mat F_1^2\mat F_2^2\big)$, and that causes an error in formula (\ref{MP}). A similar argument shows the reason why some of the terms involving $\bmu$ in (\ref{MP}) are also wrong.

A sufficient condition for formula (\ref{MP}) to be correct is that $\mat F_1\mat F_2=\mat F_2\mat
F_1$. If that condition holds, then the correct formula for the joint cumulant further simplifies
to
$$
\kappa_{r,s}(\mat A,\mat B)=2^{r+s-1}(r+s-1)!\big\{\tr\big(\mat F_1^r\mat F_2^s\big)+(r+s)\tr\big(\mat F_1^r\mat F_2^s\bSigma^{-1}\bmu\bmu^\top\big)\big\}.
$$

The proof of Theorem \ref{jointcumulant} is based on Matrix Calculus. Let us introduce some further notation to simplify the calculations. For $i=1,2$, denote
$$\mat C_i\equiv\mat C_i(t_1,t_2)=(\bI_d-2t_1\mat F_1-2t_2\mat F_2)^{-1}\mat F_i$$ and, similarly, $\mat C_3\equiv\mat C_3(t_1,t_2)=(\bI_d-2t_1\mat F_1-2t_2\mat F_2)^{-1}\bSigma^{-1}\bmu\bmu^\top$. Taking into account the formula for the differential of the inverse of a matrix given in \citet[][Chapter 8]{MN99}, notice that the introduced notation allows for simple expressions for the following differentials: for any $i\in\{1,2,3\}$ and $j\in\{1,2\}$, $d\mat C_i=2\mat C_j\mat C_idt_j.$ In words, differentiating any of these matrix functions with respect to $t_j$ consists on pre-multiplying by $2\mat C_j$.

More generally, for $i_1,\dots,i_r\in\{1,2\}$, $j\in\{1,2\}$ and $m\in\{1,2,3\}$ we have
\begin{align}
d(\mat C_{i_1}\cdots\mat C_{i_r}\mat C_m)&=\{d(\mat C_{i_1}\cdots\mat C_{i_r})\}\mat C_m+\mat C_{i_1}\cdots\mat C_{i_r}d\mat C_m\nonumber\\
&=2\Big\{\sum_{\ell=1}^r\Big(\prod_{k=1}^{\ell-1}\mat C_{i_k}\Big)(\mat C_j\mat C_{i_\ell})\Big(\prod_{k=\ell+1}^{r}\mat C_{i_k}\Big)\mat C_m
+\mat C_{i_1}\cdots\mat C_{i_r}\mat C_j\mat C_m\Big\}dt_j\nonumber\\
&=2\sum_{\ell=1}^{r+1}\Big(\prod_{k=1}^{\ell-1}\mat C_{i_k}\Big)\mat C_j\Big(\prod_{k=\ell}^{r}\mat C_{i_k}\Big)\mat C_m\, dt_j,\label{dprod}
\end{align}
where $\prod_{k=a}^b\mat C_{i_k}$ is to be understood as $\bI_d$ if $a>b$.

The key tool for the proof of Theorem \ref{jointcumulant} is the following lemma, which is indeed valid for any matrix function having the properties of $\mat C_m$ exhibited above.

\begin{lemma}\label{difw}
For any $m\in\{1,2,3\}$, consider the function $w(t_1,t_2)=\tr \mat C_m$. Then,
\begin{equation*}
\frac{\partial^{r+s}}{\partial t_1^{r}\partial t_2^{s}}w(t_1,t_2)=2^{r+s}\,r!s!\sum_{\boldsymbol i\in\mathcal{MP}_{r,s}}\tr\big(\mat C_{i_1}\cdots\mat C_{i_{r+s}}\mat C_m\big).
\end{equation*}
\end{lemma}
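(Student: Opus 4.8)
The plan is to prove Lemma \ref{difw} by induction on the total order $r+s$, using the differentiation rule for products of the $\mat C_i$ matrices established in \eqref{dprod} as the engine of the induction step. The base case $r+s=0$ is trivial, since then the formula just reads $w(t_1,t_2)=\tr\mat C_m$, and $\mathcal{MP}_{0,0}$ consists of the single empty multi-index, with $0!\,0!=1$ and $2^0=1$. For the inductive step I would assume the formula holds for some pair with total order $r+s$ and then differentiate once more, treating the two cases of differentiating with respect to $t_1$ (to reach order $(r+1,s)$) or with respect to $t_2$ (to reach order $(r,s+1)$) symmetrically.

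The core computation is as follows. Assume
\begin{equation*}
\frac{\partial^{r+s}}{\partial t_1^{r}\partial t_2^{s}}w=2^{r+s}\,r!s!\sum_{\boldsymbol i\in\mathcal{MP}_{r,s}}\tr\big(\mat C_{i_1}\cdots\mat C_{i_{r+s}}\mat C_m\big),
\end{equation*}
and apply $\partial/\partial t_1$. Using \eqref{dprod} with $j=1$, the derivative of each summand $\tr(\mat C_{i_1}\cdots\mat C_{i_{r+s}}\mat C_m)$ is $2\sum_{\ell=1}^{r+s+1}\tr\big((\prod_{k=1}^{\ell-1}\mat C_{i_k})\mat C_1(\prod_{k=\ell}^{r+s}\mat C_{i_k})\mat C_m\big)$; that is, we insert an extra factor $\mat C_1$ into each of the $r+s+1$ gaps of the word $i_1\cdots i_{r+s}$. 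Summing over all $\boldsymbol i\in\mathcal{MP}_{r,s}$ and all insertion positions produces a $(r+s+1)$-letter word in $\{1,2\}$ with $r+1$ ones and $s$ twos, i.e.\ an element of $\mathcal{MP}_{r+1,s}$, and each such word is obtained exactly $r+1$ times (once for each of its $r+1$ occurrences of the letter $1$, since deleting that occurrence recovers a word in $\mathcal{MP}_{r,s}$). Hence
\begin{equation*}
\frac{\partial^{r+s+1}}{\partial t_1^{r+1}\partial t_2^{s}}w=2^{r+s}\,r!s!\cdot 2\cdot (r+1)\sum_{\boldsymbol i\in\mathcal{MP}_{r+1,s}}\tr\big(\mat C_{i_1}\cdots\mat C_{i_{r+s+1}}\mat C_m\big)=2^{r+s+1}(r+1)!s!\sum_{\boldsymbol i\in\mathcal{MP}_{r+1,s}}\tr\big(\mat C_{i_1}\cdots\mat C_{i_{r+s+1}}\mat C_m\big),
\end{equation*}
which is exactly the claimed formula at order $(r+1,s)$. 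The case of $\partial/\partial t_2$ is identical with the roles of $1$ and $2$ swapped.

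The one place that needs genuine care — and which I expect to be the main obstacle to write cleanly rather than to conceive — is the bookkeeping of the multiplicity: verifying that inserting $\mat C_1$ into all gaps of all words of $\mathcal{MP}_{r,s}$ yields each word of $\mathcal{MP}_{r+1,s}$ with multiplicity exactly $r+1$, so that the combinatorial factor $r!s!$ updates correctly to $(r+1)!s!$. This is the standard ``insert-a-letter'' bijection argument: the map sending a pair (word in $\mathcal{MP}_{r,s}$, gap position) to the longer word is surjective onto $\mathcal{MP}_{r+1,s}$, and the fibre over a given target word is in bijection with the set of positions of the letter $1$ in it, of which there are $r+1$. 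One should also note that the order of mixed partial derivatives is immaterial here because $w$ is smooth in a neighbourhood of the origin (the matrix $\bI_d-2t_1\mat F_1-2t_2\mat F_2$ is invertible there), so it does not matter whether the $r$ differentiations in $t_1$ and the $s$ in $t_2$ are interleaved; applying them in any order gives the same result, and the induction above may be run by peeling off one derivative at a time in whichever variable is convenient. With Lemma \ref{difw} in hand, Theorem \ref{jointcumulant} will follow by evaluating at $(t_1,t_2)=(0,0)$, where $\mat C_i(0,0)=\mat F_i$ for $i=1,2$ and $\mat C_3(0,0)=\bSigma^{-1}\bmu\bmu^\top$, and assembling the cumulant from the known expansion of the joint cumulant generating function $\psi(t_1,t_2)$ in terms of the trace functions $\tr\mat C_m$.
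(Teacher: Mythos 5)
Your proof is correct and follows essentially the same route as the paper's: both rest on the differentiation rule \eqref{dprod} together with the ``insert-a-letter'' multiplicity count showing that inserting the new index into every gap of every word of the smaller multiset hits each word of the larger multiset exactly as many times as that letter occurs in it. The only (immaterial) organizational difference is that the paper first evaluates the pure $t_1$-derivatives in closed form as $2^r r!\,\tr(\mat C_1^r\mat C_m)$ and then inducts on $s$ alone, whereas you run a single symmetric induction on the total order $r+s$.
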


\begin{proof}
From (\ref{dprod}) it easily follows that $d^r\mat C_m=2^rr!\,\mat C_1^r\mat C_m\, dt_1^r$, so that
\begin{equation*}
\frac{\partial^r}{\partial t_1^r}w(t_1,t_2)=2^r\,r!\,\tr\big(\mat C_1^r\mat C_m\big).
\end{equation*}
Hence, to conclude what we need to show is that, for $s=0,1,2,\ldots$,
\begin{equation}\label{induc}
\frac{\partial^{s}}{\partial t_2^{s}}\tr\big(\mat C_1^{r}\mat C_m\big)=2^{s}\,s!\sum_{\boldsymbol i\in\mathcal{MP}_{r,s}}\tr\big(\mat C_{i_1}\cdots\mat C_{i_{r+s}}\mat C_m\big)
\end{equation}
To prove (\ref{induc}) we proceed by induction on $s$, since the initial step corresponding to $s=0$ is clear. Assuming that (\ref{induc}) is true for the $(s-1)$-th derivative, the induction step consists of showing that the formula also holds for the $s$-th derivative; that is,
\begin{equation}\label{induc2}
\sum_{\boldsymbol i\in\mathcal{MP}_{r,s-1}}\frac{\partial}{\partial t_2}\tr\big(\mat C_{i_1}\cdots\mat C_{i_{r+s-1}}\mat C_m\big)=2s\sum_{\boldsymbol i\in\mathcal{MP}_{r,s}}\tr\big(\mat C_{i_1}\cdots\mat C_{i_{r+s}}\mat C_m\big).
\end{equation}
Taking into account (\ref{dprod}), to prove (\ref{induc2}) it suffices to show that the set
\begin{align*}\label{Aset}
\mathcal A_{r,s}&=\bigcup_{\ell=1}^{r+s}\{(i_1,\dots,i_{\ell-1},2,i_{\ell},\dots,i_{r+s-1})\colon \boldsymbol i\in\mathcal{MP}_{r,s-1}\}\\
&=\{(2,i_1,\dots,i_{r+s-1})\colon \boldsymbol i\in\mathcal{MP}_{r,s-1}\}\cup\{(i_1,2,\dots,i_{r+s-1})\colon \boldsymbol i\in\mathcal{MP}_{r,s-1}\}\\&\quad\cup\cdots\cup\{(i_1,\dots,i_{r+s-1},2)\colon \boldsymbol i\in\mathcal{MP}_{r,s-1}\}
\end{align*}
coincides precisely with the multiset that contains $s$ copies of each of the elements of $\mathcal{MP}_{r,s}$. This can be showed as follows: it is clear that all the elements in $\mathcal A_{r,s}$ belong to $\mathcal{MP}_{r,s}$. On the hand, notice that any vector $\boldsymbol i=(i_1,\dots,i_{r+s})\in\mathcal{MP}_{r,s}$ contains the number 2 in exactly $s$ of its coordinates, which can be distributed along any of the $r+s$ positions. If one of those number 2 coordinates is deleted from $\boldsymbol i$, the resulting vector belongs to $\mathcal{MP}_{r,s-1}$, and repeating that process for all the $s$ coordinates with the number 2, then $s$ copies of $\boldsymbol i$ are found $\mathcal A_{r,s}$.
\end{proof}

Making use of Lemma \ref{difw} next we prove Theorem~\ref{jointcumulant}.

\begin{proof}[Proof of Theorem~\ref{jointcumulant}]
\cite{Mag86} showed that the joint cumulant generating function of $\bX^\top \mat A\bX$ and $\bX^\top \mat B\bX$ can be written as
$\psi(t_1,t_2)=u(t_1,t_2)-\tfrac12\bmu^\top\bSigma^{-1}\bmu+v(t_1,t_2)$, where
\begin{align*}
u(t_1,t_2)&=-\tfrac12\log |\bI_d-2t_1\mat F_1-2t_2\mat F_2|\qquad\text{and}\\
v(t_1,t_2)&=\tfrac12\tr\big\{(\bI_d-2t_1\mat F_1-2t_2\mat F_2)^{-1}\bSigma^{-1}\bmu\bmu^\top\big\},
\end{align*}
with $\mat F_1=\mat A\bSigma$ and $\mat F_2=\mat B\bSigma$. Since for $r+s\geq1$ the $(r,s)$-th joint cumulant is defined as $\kappa_{r,s}(\mat A,\mat B)=\frac{\partial^{r+s}}{\partial t_1^{r}\partial t_2^s}\psi(0,0)$, it suffices to show that
$$\frac{\partial^{r+s}}{\partial t_1^{r}\partial t_2^s}\psi(t_1,t_2)=2^{r+s-1}r!s!\sum_{\boldsymbol i\in\mathcal{MP}_{r,s}}\tr\big[\mat C_{i_1}\cdots\mat C_{i_{r+s}}\big\{\bI_d/(r+s)+\mat C_3\big\}\big].$$

With the previous notations, $v(t_1,t_2)=\frac12\tr\mat C_3$, so Lemma \ref{difw} immediately yields the desired formula for the second summand.

For the first one, combining the chain rule with the formula for the differential of a determinant given in \citet[][Chapter 8]{MN99}, it follows that $\frac{\partial}{\partial t_1}u(t_1,t_2)=\tr \mat C_1$. So, applying Lemma \ref{difw} to $\frac{\partial}{\partial t_1}u(t_1,t_2)$, we obtain
$$\frac{\partial^{r+s}}{\partial t_1^{r}\partial t_2^{s}}u(t_1,t_2)=2^{r+s-1}\,(r-1)!s!\sum_{\boldsymbol i\in\mathcal{MP}_{r-1,s}}\tr\big(\mat C_{i_1}\cdots\mat C_{i_{r+s-1}}\mat C_1\big).$$
By the symmetry in $(t_1,t_2)$ and $(r,s)$ of the preceding argument we come to
\begin{multline*}
(r+s)\times\frac{\partial^{r+s}}{\partial t_1^{r}\partial t_2^{s}}u(t_1,t_2)\\=2^{r+s-1}\,r!s!\Big\{\sum_{\boldsymbol i\in\mathcal{MP}_{r-1,s}}\tr\big(\mat C_{i_1}\cdots\mat C_{i_{r+s-1}}\mat C_1\big)+\sum_{\boldsymbol i\in\mathcal{MP}_{r,s-1}}\tr\big(\mat C_{i_1}\cdots\mat C_{i_{r+s-1}}\mat C_2\big)\Big\}.
\end{multline*}
The proof is finished by noting that, clearly,
\begin{equation*}
\mathcal{MP}_{r,s}=\{(i_1,\dots,i_{r+s-1},1)\colon\boldsymbol i\in\mathcal{MP}_{r-1,s}\}\cup\{(i_1,\dots,i_{r+s-1},2)\colon\boldsymbol i\in\mathcal{MP}_{r,s-1}\}.
\qedhere
\end{equation*}
\end{proof}

Although Theorem \ref{thm:eta} suffices to obtain a fast recursive implementation of the CV, PI and
SCV criteria, here a slightly more general version of this result is shown. Let us denote
$\widetilde\eta_{r,s}(\bx; \mat A, \mat B, \bSigma) = [(\vec^\top \mat A)^{\otimes r} \otimes
(\vec^\top \mat B)^{\otimes s}] \allowbreak \D^{\otimes 2r+2s} \phi_{\bSigma} (\bx)$ for $d \times
d$ symmetric matrices $\mat A, \mat B$ and also $\widetilde\eta_{r}(\bx; \mat A, \bSigma) \equiv
\widetilde\eta_{r,0}(\bx; \mat A, \bI_d, \bSigma)  = (\vec^\top \mat A)^{\otimes r} \D^{\otimes 2r}
\phi_{\bSigma} (\bx)$. Notice that the $\eta$ functionals can be seen to be particular cases of the
$\widetilde\eta$ functionals by setting $\mat A=\bI_d$.

\begin{theorem}
\label{thm:eta2} For a fixed $\bx$, the previous $\widetilde\eta$ functionals are related to the
$\nu$ functionals as follows
\begin{align*}
\widetilde\eta_{r} (\bx; \mat A, \bSigma)
&= \phi_{\bSigma} (\bx) \nu_r\big(\mat A; \bSigma^{-1} \bx , -\bSigma^{-1}\big) \\
\widetilde\eta_{r,s} (\bx; \mat A, \mat B, \bSigma) &=
\phi_{\bSigma} (\bx) \nu_{r,s}\big(\mat A, \mat B; \bSigma^{-1} \bx , -\bSigma^{-1}\big).
\end{align*}
\end{theorem}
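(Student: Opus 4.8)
The plan is to derive Theorem~\ref{thm:eta2} directly from the closed form~(\ref{Drphi}) for the Gaussian derivatives together with the moment identities~(\ref{muHerm}) and~(\ref{quadmom}); Theorem~\ref{thm:eta} is then the special case $\mat A=\bI_d$. First I would unfold $\widetilde\eta_{r,s}$: applying~(\ref{Drphi}) at order $2r+2s$ and using $(-1)^{2r+2s}=1$,
\[
\widetilde\eta_{r,s}(\bx;\mat A,\mat B,\bSigma)=\phi_\bSigma(\bx)\,[(\vec^\top\mat A)^{\otimes r}\otimes(\vec^\top\mat B)^{\otimes s}]\,(\bSigma^{-1})^{\otimes(2r+2s)}\Herm_{2r+2s}(\bx;\bSigma),
\]
and similarly $\widetilde\eta_r(\bx;\mat A,\bSigma)=\phi_\bSigma(\bx)(\vec^\top\mat A)^{\otimes r}(\bSigma^{-1})^{\otimes 2r}\Herm_{2r}(\bx;\bSigma)$.

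The crux is the scaling identity
\[
(\bSigma^{-1})^{\otimes k}\Herm_k(\bx;\bSigma)=\Herm_k(\bSigma^{-1}\bx;\bSigma^{-1}),\qquad k\geq 0.
\]
To prove it I would begin from the definition~(\ref{Herm}) and commute $(\bSigma^{-1})^{\otimes k}$ past $\S_{d,k}$, using the elementary property $\mat M^{\otimes k}\S_{d,k}=\S_{d,k}\mat M^{\otimes k}$ valid for any $d\times d$ matrix $\mat M$; this is immediate from the permutation description of the action of $\S_{d,k}$ on Kronecker products given after~(\ref{expdef}), since symmetrising over permutations of the factors commutes with applying the same linear map to each factor. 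It then suffices to evaluate $(\bSigma^{-1})^{\otimes k}\{\bx^{\otimes(k-2j)}\otimes(\vec \bSigma)^{\otimes j}\}$ for each $j$: splitting $(\bSigma^{-1})^{\otimes k}=(\bSigma^{-1})^{\otimes(k-2j)}\otimes(\bSigma^{-1})^{\otimes 2j}$ and applying the mixed-product rule, the first block yields $(\bSigma^{-1}\bx)^{\otimes(k-2j)}$, while regrouping $(\bSigma^{-1})^{\otimes 2j}$ into $j$ copies of $\bSigma^{-1}\otimes\bSigma^{-1}$ and using $(\bSigma^{-1}\otimes\bSigma^{-1})\vec \bSigma=\vec(\bSigma^{-1}\bSigma\bSigma^{-1})=\vec \bSigma^{-1}$ (here the symmetry of $\bSigma$ enters) gives $(\bSigma^{-1})^{\otimes 2j}(\vec \bSigma)^{\otimes j}=(\vec \bSigma^{-1})^{\otimes j}$. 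Reassembling and comparing with~(\ref{Herm}) produces exactly $\Herm_k(\bSigma^{-1}\bx;\bSigma^{-1})$.

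Substituting this identity at $k=2r+2s$ into the unfolded expression gives
\[
\widetilde\eta_{r,s}(\bx;\mat A,\mat B,\bSigma)=\phi_\bSigma(\bx)\,[(\vec^\top\mat A)^{\otimes r}\otimes(\vec^\top\mat B)^{\otimes s}]\,\Herm_{2r+2s}(\bSigma^{-1}\bx;\bSigma^{-1}).
\]
By~(\ref{muHerm}), the raw vector moment of order $2r+2s$ associated with mean $\bSigma^{-1}\bx$ and variance $-\bSigma^{-1}$ equals $\Herm_{2r+2s}(\bSigma^{-1}\bx;-(-\bSigma^{-1}))=\Herm_{2r+2s}(\bSigma^{-1}\bx;\bSigma^{-1})$; by the remark after~(\ref{muHerm}) this stays valid even though $-\bSigma^{-1}$ is negative definite, and this is exactly the convention under which $\nu_{r,s}(\cdot;\bSigma^{-1}\bx,-\bSigma^{-1})$ is to be interpreted. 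Formula~(\ref{quadmom}) then identifies the bracketed quantity with $\nu_{r,s}(\mat A,\mat B;\bSigma^{-1}\bx,-\bSigma^{-1})$, which is the second asserted identity; the first follows on setting $s=0$ and $\mat B=\bI_d$, and Theorem~\ref{thm:eta} is the further specialisation $\mat A=\bI_d$.

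The step I expect to be the real work is the scaling identity: pushing $(\bSigma^{-1})^{\otimes k}$ past the symmetrizer matrix and correctly distributing it over the mixed Kronecker product $\bx^{\otimes(k-2j)}\otimes(\vec \bSigma)^{\otimes j}$ requires careful index and block bookkeeping. Once that is done, the rest is a direct chain of substitutions using~(\ref{Drphi}), (\ref{muHerm}) and~(\ref{quadmom}).
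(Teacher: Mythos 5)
Your proof is correct and follows essentially the same route as the paper: unfold $\widetilde\eta$ via~(\ref{Drphi}), apply the scaling identity $(\bSigma^{-1})^{\otimes k}\Herm_k(\bx;\bSigma)=\Herm_k(\bSigma^{-1}\bx;\bSigma^{-1})$, and identify the result with $\nu_{r,s}$ through~(\ref{muHerm}) and~(\ref{quadmom}). The only difference is that the paper simply cites Theorem~3.1 of Holmquist (1996a) for the scaling identity, whereas you rederive it from~(\ref{Herm}); your derivation (commuting $(\bSigma^{-1})^{\otimes k}$ past $\S_{d,k}$ and using $(\bSigma^{-1}\otimes\bSigma^{-1})\vec\bSigma=\vec\bSigma^{-1}$) is sound.
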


\begin{proof}
Notice that (\ref{muHerm}) entails $\nu_r(\mat A;\bmu,\bSigma)=(\vec^\top \mat A)^{\otimes
r}\boldsymbol{\mathcal H}_{2r}(\bmu;-\bSigma)$. And from Theorem 3.1 in \cite{Hol96a},
$(\bSigma^{-1})^{\otimes 2r}\boldsymbol{\mathcal H}_{2r}(\bx;\bSigma)=\boldsymbol{\mathcal
H}_{2r}(\bSigma^{-1}\bx;\bSigma^{-1})$. Therefore,
\begin{align*}
\widetilde\eta_r(\bx;\mat A,\bSigma)&=(\vec^\top \mat A)^{\otimes r} \D^{\otimes 2r} \phi_{\bSigma}(\bx) =\phi_\bSigma(\bx)(\vec^\top \mat A)^{\otimes r}(\bSigma^{-1})^{\otimes 2r}\boldsymbol{\mathcal
H}_{2r}(\bx;\bSigma)\\
&=\phi_\bSigma(\bx)(\vec^\top \mat A)^{\otimes r}\boldsymbol{\mathcal
H}_{2r}(\bSigma^{-1}\bx;\bSigma^{-1})=\phi_\bSigma(\bx)\nu_r\big(\mat A; \bSigma^{-1} \bx , -\bSigma^{-1}\big),
\end{align*}
as desired. The proof for $\widetilde\eta_{r,s}$ follows analogously.
\end{proof}

\section{Appendix: Generation of all the permutations with repetitions}\label{pinv}

A preliminary step to the methods described in Sections \ref{sec:symm}, \ref{sec:4} and
\ref{sec:unique} involves generating the set of all the permutations with repetitions
$\mathcal{PR}_{d,r}$. This set can be portrayed as a matrix $\mat P$ of order $d^r\times r$, whose
$(i,j)$-th entry represents the $j$-th coordinate of the $i$-th permutation in
$\mathcal{PR}_{d,r}$.

Moreover, in view of Section \ref{sec:unique} it seems convenient to keep the natural order of
these permutations induced by the formulation $\mathcal{PR}_{d,r}=\{p^{-1}(i)\colon
i=1,\dots,d^r\}$. Hence, in our construction the vector $p^{-1}(i)$ will constitute the $i$-th row
of $\mat P$.

Let $\lfloor x\rfloor$ denote the integer part of a real number $x$, that is, the largest integer
not greater than $x$. Then, if $i=p(i_1,\dots,i_r)=1+\sum_{j=1}^r(i_j-1)d^{j-1}$ with
$i_1,\dots,i_r\in\{1,\dots,d\}$, it is not hard to show that
$\lfloor(i-1)/d^{k-1}\rfloor=\sum_{j=k}^r(i_j-1)d^{j-k}$ for $k=1,\dots,r$, so that the $j$-th
coordinate of the vector $\boldsymbol i=(i_1,\dots,i_r)=p^{-1}(i)$ can be expressed as $i_j=\lfloor
(i-1)/d^{j-1}\rfloor-d\lfloor (i-1)/d^j\rfloor+1$.

Thus, assuming there is a {\tt floor}$()$ function available that can be applied in an element-wise
form to a matrix and returns the integer part of each of its entries, the set $\mathcal{PR}_{d,r}$
is efficiently obtained as the matrix $\mat P={\tt floor}(\mat Q_{-(r+1)})-d\cdot{\tt floor}(\mat
Q_{-1})+1$, where $\mat Q$ is a $d^r\times (r+1)$ matrix whose $(i,j)$-th entry is $(i-1)/d^{j-1}$
for $i=1,\dots,d^r$ and $j=1,\dots,r+1$, and $\mat Q_{-k}$ refers to the sub-matrix obtained from
$\mat Q$ by deleting the $k$-th column.

\bibliographystyle{apalike}

\end{document}